 \tikzstyle{br} = [decorate, ultra thick, decoration = {calligraphic brace}]
\newcommand{\dcut}{{\sc $d$-Cut}}
\newcommand{\mmc}{{\sc Maximum Matching Cut}}
\newcommand{\pmc}{{\sc Perfect Matching Cut}}
\newcommand{\NP}{{\sf NP}}
\newcommand{\ssi}{\subseteq_i}
\newcommand{\si}{\supseteq_i}
\newtheorem{observation}[theorem]{Observation}
\definecolor{nicered}{RGB}{204,0,0}
\definecolor{lightblue}{RGB}{153,204,255}
\tikzstyle{vertex}=[thin,circle,inner sep=0.cm, minimum size=1.7mm, fill=black, draw=black]%normal
 \tikzstyle{svertex}=[thin,circle,inner sep=0.cm, minimum size=1.3mm, fill=black, draw=black]%small
 \tikzstyle{bvertex}=[thin,circle,inner sep=0.cm, minimum size=1.7mm, fill=lightblue, draw=lightblue]%blue
 \tikzstyle{rvertex}=[thin,circle,inner sep=0.cm, minimum size=1.7mm, fill=nicered,draw=nicered]%red
 \tikzstyle{evertex}=[thin,circle,inner sep=0.cm, minimum size=1.7mm, fill=none,draw=black]%empty
 \tikzstyle{edge}=[thick, draw = gray]
 \tikzstyle{tedge}=[ultra thick, draw = black]%thick
 \tikzstyle{tredge}=[ultra thick, draw = nicered]%thick red
 \tikzstyle{tbedge}=[ultra thick, draw=lightblue]%vorher pedge
 \tikzstyle{redge}=[thick, draw = nicered]%auch rededge
 \tikzstyle{bedge}=[thick, draw = lightblue] %auch bluedge
 \tikzstyle{gedge}=[thick, draw = nicegreen] %auch grnedge
 \tikzstyle{brace} = [decorate, ultra thick, decoration = {calligraphic brace}]
\begin{document}

\title{Finding $d$-Cuts in Probe $H$-Free Graphs}

\titlerunning{Finding $d$-Cuts in Probe $H$-Free Graphs}

\author{Konrad K. Dabrowski\inst{1}\orcidID{0000-0001-9515-6945} 
\and Tala Eagling-Vose\inst{2}\orcidID{0009-0008-0346-7032}
\and  Matthew Johnson\inst{2}\orcidID{0000-0002-7295-2663} 
\and Giacomo~Paesani\inst{3}\orcidID{0000-0002-2383-1339}
\and Dani\"el Paulusma\inst{2}\orcidID{0000-0001-5945-9287}}

\authorrunning{K.K. Dabrowski, T. Eagling-Vose, M. Johnson, G. Paesani, D. Paulusma}

%\institute{School of Computing, 
\institute{Newcastle University, Newcastle, UK \email{konrad.dabrowski@newcastle.ac.uk} \and
%Department of Computer Science, 
Durham University, Durham, UK \email{\{tala.j.eagling-vose,matthew.johnson2,daniel.paulusma\}@durham.ac.uk}\and
%Department of Computer Science, 
Sapienza University of Rome, Rome, Italy \email{ paesani@di.uniroma1.it}}

\maketitle             

\begin{abstract}
For an integer $d\geq 1$, the \dcut\ problem is that of deciding whether a graph has an edge cut in which each vertex is adjacent to at most~$d$ vertices on the opposite side of the cut. The $1$-{\sc Cut} problem is the well-known {\sc Matching Cut} problem. The \dcut\ problem has been extensively studied for $H$-free graphs. We extend these results to the probe graph model, where we do not know all the edges of the input graph. 
For a graph~$H$, a partitioned probe $H$-free graph $(G,P,N)$ consists of a graph $G=(V,E)$, together with a set $P\subseteq V$ of probes and an independent set $N=V\setminus P$ of non-probes
such that we can change $G$ into an $H$-free graph by adding zero or more edges between vertices in $N$.
For every graph~$H$ and every integer $d\geq 1$, we completely determine the complexity of \dcut\ on partitioned probe $H$-free graphs. 
%\keywords{$d$-cut \and probe graph \and subset problem \and $H$-free graph}
\end{abstract}

\section{Introduction}\label{s-intro}

When studying computationally hard problems for special graph classes, it is natural to generalize polynomial-time results for certain graph classes to larger graph classes. In particular, consider a graph $H'$ and an induced subgraph $H$ of $H'$. The class of {\it $H$-free} graphs (class of graphs that do not contain $H$ as an induced subgraph) is contained in the class of $H'$-free graphs. Say an \NP-complete problem $\Pi$ is polynomial-time solvable on $H$-free graphs. Is $\Pi$ also polynomial-time solvable on $H'$-free graphs? This question leads to complexity studies for a wide range of graph problems where the goal is to obtain {\it complexity dichotomies} that tell us for exactly which graphs $H$ a certain \NP-complete problem is polynomial-time solvable, and for which graphs $H$ it stays \NP-complete.

We follow this line of research, but also assume that we do not know all the edges of the input graph. Before explaining the latter in more detail, we first introduce the problem that we study. 
Consider a connected graph $G=(V,E)$. A subset $M\subseteq E$ is an {\it edge cut} of $G$ if it is possible to partition $V$ into two non-empty sets $B$ ({\it blue} vertices) and $R$ ({\it red} vertices) in such a way that $M$ is the set of all edges with one end-vertex in $B$ and the other in~$R$.
Now, for an integer $d\geq 1$, if every blue vertex has at most $d$ red neighbours, and every red vertex has at most $d$ blue neighbours, then the edge cut $M$ is said to be a {\em $d$-cut} of $G$. See also Figure~\ref{fig-t-d-cut}.
The \dcut\ problem is that of deciding whether a connected graph has a $d$-cut. A $1$-cut is also called a {\it matching cut}, and the $1$-{\sc Cut} problem is better known as {\sc Matching Cut}.
For all $d\geq1$, $d$-{\sc Cut} is \NP-complete~\cite{Ch84,GS21}. 
Graphs with matching cuts were introduced in 1970 by Graham~\cite{Gr70} in the context of number theory;
for other applications see~\cite{ACGH12,FP82,GPS12,PP01}.

\begin{figure}[t]
\centering
\scalebox{.7}{
\begin{tikzpicture}
\coordinate (B1) at (-2,1);
\coordinate (B2) at (-1,1);
\coordinate (B3) at (0,1);
\coordinate (B4) at (1,1.5);
\coordinate (B5) at (2,1);
\coordinate (R1) at (-2,-1.5);
\coordinate (R2) at (-1,-1.5);
\coordinate (R3) at (0,-1);
\coordinate (R4) at (1,-1);
\coordinate (R5) at (2,-1.5);  
\draw[fill=gray!20!white](-0.5,-2) rectangle (2.5,2);
\draw (R4)--(B2)--(B3)--(B5)--(B4)(B1)--(B4)--(B3)--(R1)(B3)--(R5)(R2)--(B4)--(R3)(R4)--(B5)--(R5)(R5)--(R2)(R1)--(R3)--(R4)--(R2)--(R3)(B1)--(R3);
\draw[color=blue!50!white,thick](B1)--(B2)(R1)--(R2)(R1)--(B1)--(R2)--(B2)--(R1);
\draw[fill=blue] 
(B1) circle [radius=3pt]
(B2) circle [radius=3pt]
(B3) circle [radius=3pt]
(B4) circle [radius=3pt]
(B5) circle [radius=3pt];
\draw[fill=red] 
(R1) circle [radius=3pt]
(R2) circle [radius=3pt]
(R3) circle [radius=3pt]
(R4) circle [radius=3pt]
(R5) circle [radius=3pt];
\node[right] at (2.5,0) {$P$};
\end{tikzpicture}}
\caption{A graph $G$ with a set $P$ of probes. The set $F$ is the set of dashed edges. The blue-red colouring corresponds to a $2$-cut of $G$ and a $3$-cut in $G+F$.}\label{fig-t-d-cut}
\vspace*{-0.5cm}
\end{figure}
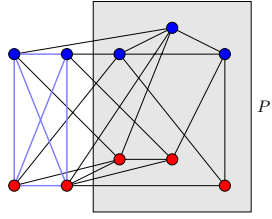

\medskip
\noindent
{\bf Our Focus.} 
We consider the classical probe graph model, which was introduced by Zhang et al.~\cite{ZSFCWKB94} in 1994 to deal with partial information in genome research. In this model,
 the complete set of neighbours is only known for {\it some} vertices of the input graph~$G$. These vertices form the set $P$ of {\it probes}. The other vertices of $G$ form the set $N$ of {\it non-probes}. As we do not know the adjacencies between vertices in $N$, the set $N$ is an independent set in $G$. However, in the probe graph model we also assume there exists a ``certifying'' set $F$ of edges between (some of) the non-probes such that $G+F$ has some known global structure; again see Figure~\ref{fig-t-d-cut}. In our paper, $G+F$ is $H$-free. Note that $G[P]$ is already $H$-free.

So, a {\it partitioned probe $H$-free} graph $(G,P,N)$ consists of a graph $G=(V,E)$, a set $P\subseteq V$ of probes and an independent set $N=V\setminus P$ of non-probes, such that $G+F$ is $H$-free for some edge subset $F\subseteq \binom{N}{2}$. Any $H$-free graph is also (partitioned) probe $H$-free: take $P=V$ and $N=\emptyset$. Hence, (partitioned) probe $H$-free graphs contain all $H$-free graphs, and any \NP-completeness results for $H$-free graphs carry over to partitioned probe $H$-free graphs. We therefore ask:

\medskip
\noindent
\textit{For which $H$, does $d$-{\sc Cut} stay polynomial-time solvable on probe $H$-free graphs?}

\medskip
\noindent
As such, our paper belongs to a recent systematic study of graph problems on probe $H$-free graphs. This study was initiated by Brettell et al.~\cite{BOPPRL25} for \textsc{Vertex Cover}, whereas the previous literature on probe graphs aimed to characterize and recognize classes of probe graphs; see e.g.~\cite{BGL07,CCKLP09,CKKLP05,GL04,GMM11}.
For example,  if $H=P_4$, then probe $H$-free graphs can be recognized in polynomial time~\cite{CKKLP05}.
However, for most other graphs~$H$, the complexity of recognizing probe $H$-free graphs is still unknown. Hence, for our algorithms, we assume that $P$ and $N$ are part of the input, that is, we will consider partitioned probe $H$-free graphs. 

We will also consider two related problems: {\sc Maximum Matching Cut} and {\sc Perfect Matching Cut} on probe $H$-free graphs. The first is to decide if a connected graph has a matching cut of at least $k$ edges for some integer~$k$. 
 The second is to decide if a connected graph has a {\it perfect matching cut}, that is, an edge cut that is a perfect matching. This problem is also \NP-complete~\cite{HT98}.

\medskip
\noindent
{\bf Known Results.} For two vertex-disjoint graphs $G_1$ and $G_2$, let $G_1+G_2=(V(G_1)\cup V(G_2),E(G_1)\cup E(G_2))$. We let $sG$ be the disjoint union of $s$ copies of $G$. We write $G_1\ssi G_2$ if $G_1$ is an induced subgraph of~$G_2$.
Let $C_s$ denote the cycle on $s$ vertices, $P_t$ the path on $t$ vertices, and $K_{1,r}$ the star on $r+1$ vertices. The graph $K_{1,3}$ is known as the {\it claw}.
Let $H^*_1$ be the ``H''-graph, which has vertices $u,v,w_1,w_2,x_1,x_2$ and edges $uv, uw_1,uw_2,vx_1,vx_2$.
For $i\geq 2$, let $H_i^*$ be obtained from $H_1^*$ by subdividing $uv$ exactly $i-1$ times.
See  Figure~\ref{fig-examples}. 

\begin{figure}[t]
\centering
\hspace*{1cm}
 \scalebox{.7}{
\begin{minipage}{0.4\textwidth}
\begin{tikzpicture}
\coordinate (S1) at (-2,0);
\coordinate (S2) at (-1,0);
\coordinate (S3) at (1,0);
\coordinate (S4) at (2,0);
\coordinate (P1) at (-1.5,-1);
\coordinate (P2) at (-0.5,-1);
\coordinate (P3) at (0.5,-1);
\coordinate (P4) at (1.5,-1);
\draw[fill=black]
(S1) circle [radius=3pt]
(S2) circle [radius=3pt]
(S3) circle [radius=3pt]
(S4) circle [radius=3pt]
(P1) circle [radius=3pt]
(P2) circle [radius=3pt]
(P3) circle [radius=3pt]
(P4) circle [radius=3pt]
(P1)--(P2)--(P3)--(P4);
\draw[dotted] (S2)--(S3);
\draw [decorate,decoration={brace,amplitude=5pt,mirror,raise=2ex}]   (S4)--(S1) node[rotate=0,midway,yshift=2.5em]{$i$ vertices};
\end{tikzpicture}
\end{minipage}%
\hspace*{1cm}
\begin{minipage}{0.25\textwidth}
\begin{tikzpicture}
\coordinate (S) at (0,2);
\coordinate (A1) at (-1,1);
\coordinate (B1) at (0,1);
\coordinate (C1) at (1,1);
\draw[fill=black]
(S) circle [radius=3pt]
(A1) circle [radius=3pt]
(B1) circle [radius=3pt]
(C1) circle [radius=3pt];
\draw (A1)--(S)--(B1)(S)--(C1);
\end{tikzpicture}
\end{minipage}%
\hspace*{1cm}
\begin{minipage}{0.4\textwidth}
\begin{tikzpicture}
\coordinate (S1) at (2,1);
\coordinate (S2) at (2,0);
\coordinate (S3) at (2,-1);
\coordinate (X1) at (1,0);
\coordinate (X2) at (0,0);
\coordinate (P1) at (-1,1);
\coordinate (P2) at (-1,0);
\coordinate (P3) at (-1,-1);
\draw[fill=black]
(S1) circle [radius=3pt]
(S2) circle [radius=3pt]
(S3) circle [radius=3pt]
(X1) circle [radius=3pt]
(X2) circle [radius=3pt]
(P1) circle [radius=3pt]
(P2) circle [radius=3pt]
(P3) circle [radius=3pt]
(S1)--(S3)(P1)--(P3)(S2)--(X1)(X2)--(P2);
\draw[dotted] (X1)--(X2);
\draw [decorate,decoration={brace,amplitude=5pt,mirror,raise=2ex}]   (S2)--(P2) node[midway,yshift=2.5em]{$i$ edges};
\end{tikzpicture}
\end{minipage}}
\caption{The graphs $sP_1+P_4$, $K_{1,3}$ and $H^*_i$, from left to right.}
\label{fig-examples}  
\vspace*{-0.4cm}
\end{figure}
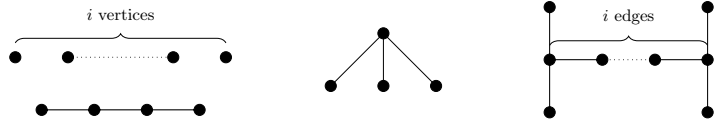

In Theorems~\ref{t-0}--\ref{t-2} we present the state-of-art for \dcut, \pmc\ and \mmc\ for $H$-free graphs. Only Theorem~\ref{t-2} is a full dichotomy.
The references in Theorem~\ref{t-0} are explained in~\cite{LMPS24} except for the recent result that $2$-{\sc Cut} is \NP-complete for claw-free graphs~\cite{AELPS}; note the jump in complexity from $d=1$ to $d=2$ for $H=3P_2$ and $H=K_{1,3}$. For $d\geq 2$, the only three non-equivalent open cases in Theorem~\ref{t-0} are $H=2P_4$, $H= P_6$ and $H=P_7$ (see also~\cite{LMPS24}).
The references in Theorem~\ref{t-1} are explained in~\cite{LPR24}.

\begin{theorem}[\cite{AELPS,Bo09,Ch84,FLPR23,LL23,LMO25,LMPS24,LPR22,LPR23a,Mo89}]\label{t-0} Let $H$ be a graph and $d\geq 1$.\\[-0.5cm]
%For an integer $d\geq 1$ and a graph $H$, the following holds for \dcut\ on $H$-free graphs:
\begin{itemize}
\item 
If $d=1$, then \dcut\ on $H$-free graphs
is polynomial-time solvable if $H\ssi sP_3+S_{1,1,2}$, $sP_3+P_4+P_6$, or $sP_3+P_7$ for some $s\geq 0$; and
\NP-complete if $H\si K_{1,4}$, $P_{14}$, $2P_7$, $3P_5$, $C_r$ for $r\geq 3$, or $H_i^*$ for  $i\geq 1$.\\[-8pt]
\item 
If $d\geq 2$, then \dcut\ on $H$-free graphs is polynomial-time solvable if $H\ssi  sP_1+P_3+P_4$ or $sP_1+P_5$ for some $s\geq 0$; and \NP-complete if $H\si K_{1,3}$, $3P_2$, $C_r$ for $r\geq 3$, or $H_i^*$ for  $i\geq 1$.
\end{itemize}
\end{theorem}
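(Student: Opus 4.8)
Since Theorem~\ref{t-0} collects results from several sources, the plan is not to give a single argument but to explain the two engines that drive every entry, together with the monotonicity that lets a handful of extremal cases cover the whole table. The monotonicity is the key organizing principle: if $H\ssi H'$ then every $H$-free graph is $H'$-free, so a polynomial-time algorithm for $H'$-free graphs immediately handles $H$-free graphs, while an \NP-completeness proof for $H$-free graphs immediately transfers to all $H'$-free graphs with $H\ssi H'$. Hence it suffices to establish tractability only for the maximal graphs $H$ listed (the linear-forest-plus-small-tree patterns such as $sP_3+S_{1,1,2}$, $sP_3+P_4+P_6$, $sP_3+P_7$ when $d=1$, and $sP_1+P_3+P_4$, $sP_1+P_5$ when $d\geq2$), and hardness only for the minimal forbidden graphs ($K_{1,4}$, the long and disjoint paths, every cycle $C_r$, and the family $H_i^*$ for $d=1$; $K_{1,3}$, $3P_2$, $C_r$, $H_i^*$ for $d\geq2$).

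For the tractable side I would argue that forbidding such a linear forest forces a bounded structure on any red/blue partition. The engine here is a bounded-depth branching: I fix the colours of a small set of vertices (for instance the endpoints and near-neighbourhoods of a longest induced path, whose length is bounded because $H$ is a sparse forest), and show that forbidding $H$ limits how the two colour classes can interleave. After the branching, the residual problem becomes a constraint that is solvable in polynomial time---typically by propagating forced colours and reducing to an instance of $2$-satisfiability, or to a matching/flow computation that encodes the at-most-$d$ degree condition across the cut. The delicacy is that the precise forest (e.g.\ $P_7$ versus $P_6+P_4$ when $d=1$) determines how much branching is affordable before the structure collapses, which is why the tractable frontier is so finely calibrated.

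For the intractable side the uniform strategy is a reduction from an \NP-complete problem such as \nae, or from \dcut\ on general (or planar, or bounded-degree) graphs, using gadgets whose induced subgraphs avoid the relevant $H$. Forbidding a cycle $C_r$ is handled by making all gadgets locally of girth exceeding $r$ or otherwise $C_r$-free; the star cases $K_{1,r}$ are handled by bounding degrees inside the gadgets; and the $H_i^*$ family is handled by building gadgets from long subdivided paths so that no copy of the subdivided ``H''-graph appears. The monotonicity argument above then upgrades each minimal hardness result to every larger $H$.

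The step I expect to be the main obstacle is the recent claw case: showing that $2$-{\sc Cut} (indeed \dcut\ for all $d\geq2$) is \NP-complete on claw-free graphs, i.e.\ for $H=K_{1,3}$. This is exactly the point where the complexity jumps between $d=1$ and $d\geq2$ (the same jump occurs for $H=3P_2$), so the reduction must produce claw-free instances---severely constraining the neighbourhood structure---while still encoding the extra freedom that a $d$-cut with $d\geq2$ allows but a matching cut does not. Matching that flexibility against the claw-free restriction is the heart of the argument, and is what makes this boundary the genuinely hard part of the dichotomy.
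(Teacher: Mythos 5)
This statement is the paper's ``Known Results'' theorem: the paper itself gives no proof of it, but assembles it from roughly ten prior works (the attributions are explained in~\cite{LMPS24}, plus the recent result of~\cite{AELPS} that $2$-{\sc Cut} is \NP-complete on claw-free graphs). Your proposal correctly identifies the one piece of genuine reasoning that the compilation relies on, namely monotonicity under the induced-subgraph order: tractability need only be shown for the maximal listed graphs $H$ and hardness only for the minimal ones, since $H\ssi H'$ implies that every $H$-free graph is $H'$-free. Your description of the two ``engines'' is also a fair caricature of the literature: the polynomial cases do proceed by bounded branching on neighbourhoods followed by colour propagation and a reduction to $2$-satisfiability or a matching subroutine, and the hardness cases do use gadget reductions from \nae{} (Chv\'atal's original construction) or edge-replacement tricks in the style of Moshi.

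The gap is that, read as a proof, your text defers essentially all of the technical content. Each bullet of the theorem aggregates several separate results, each of which is the main theorem of its own paper --- for instance the polynomial-time algorithm for {\sc Matching Cut} on $(sP_3+P_7)$-free graphs, or the \NP-completeness of \dcut{} on claw-free graphs for $d\geq 2$ --- and none of these is actually established by your sketch; phrases such as ``forbidding $H$ limits how the two colour classes can interleave'' and ``gadgets whose induced subgraphs avoid the relevant $H$'' name the difficulty without resolving it. You also correctly flag the claw-free case as the hardest step, but flagging it is not proving it. In the context of this paper the intended justification of Theorem~\ref{t-0} is simply the list of citations; a self-contained proof would require reproducing the arguments of those papers, which your proposal does not attempt. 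So the proposal is best described as an accurate high-level map of where the proofs live, with the proofs themselves missing.
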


\begin{theorem}[\cite{FLPR23,LL23,LT22,LPR23a}]\label{t-1}
%For a graph~$H$, 
\pmc\ on $H$-free graphs is 
 polynomial-time solvable if $H\ssi sP_4+S_{1,2,2}$ or $sP_4+P_6$ for some $s\geq 0$; and
\NP-complete if $H\si K_{1,4}$, $P_{14}$,  $2P_7$, $3P_6$, $C_r$ for $r\geq 3$ or $H_j^*$ for $j\geq 1$.
\end{theorem}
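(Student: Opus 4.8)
The statement is a two-sided, dichotomy-style result collected from several sources, so the plan is to attack the tractable and the intractable cases with entirely different machinery. Since the class of $H$-free graphs is contained in the class of $H'$-free graphs whenever $H\ssi H'$, it suffices on the polynomial-time side to give algorithms for the two maximal graphs $H=sP_4+S_{1,2,2}$ and $H=sP_4+P_6$, and on the hard side to prove \NP-completeness for the minimal forbidden graphs $K_{1,4}$, $P_{14}$, $2P_7$, $3P_6$, $C_r$ for $r\geq 3$, and $H_j^*$ for $j\geq 1$; all remaining cases in each list then follow by the class-inclusion above. Throughout I would exploit the defining rigidity of a perfect matching cut: in the blue--red partition \emph{every} vertex has exactly one neighbour of the opposite colour, a far stronger local constraint than the ``at most one'' required by an ordinary \mc.

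For the polynomial-time side my plan is the standard ``dominate, branch, propagate'' scheme. First I would invoke a structural decomposition for $H$-free graphs to compute, in polynomial time, a dominating structure of bounded size---a bounded number of cliques or a short dominating path---using that forbidding a graph of the form $sP_4+P_6$ (respectively $sP_4+S_{1,2,2}$) limits how long an induced path, and hence how spread out such a dominating structure, can be. Second, I would branch over the constantly many blue--red colourings of this dominating structure. Third, since every remaining vertex is adjacent to the dominating structure and must receive exactly one opposite-coloured neighbour, I would propagate the forced choices and reduce the completion question to a polynomially solvable instance of $2$-satisfiability or bipartite matching. The one point needing extra care is that $S_{1,2,2}$ is a genuine spider rather than a path, so its central branch vertex must be handled separately when bounding the dominating structure.

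For the \NP-hardness side the plan is a family of polynomial reductions from a constraint-satisfaction problem such as \nae, in the spirit of the known hardness proof for \pmc\ on general graphs. The gadgets would be engineered to satisfy three properties simultaneously: maximum degree three, so that no $K_{1,4}$ can occur; tunably large girth, so that for each fixed $r\geq 3$ the construction can be made $C_r$-free by pushing its girth above $r$; and a controlled induced-path profile, so that the output contains none of $P_{14}$, $2P_7$, $3P_6$, or $H_j^*$ as an induced subgraph. Establishing each named hardness case then reduces to instantiating the base construction so that the resulting graph is $H$-free while the reduction's correctness is preserved.

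The main obstacle, and where I expect the bulk of the work to lie, is the correctness of the tractable algorithms rather than the hardness reductions. Concretely, one must show that the bounded-size dominating structure genuinely reduces the \emph{global} existence of a perfect matching cut to a \emph{local}, polynomially checkable constraint system---that is, that the finitely many colourings of the dominating set really do capture every perfect matching cut. This rests on a delicate case analysis showing that, were the dominating structure too large, an induced copy of $sP_4+S_{1,2,2}$ or $sP_4+P_6$ would be forced to appear, contradicting $H$-freeness. By contrast, once a base gadget of maximum degree three and large girth is fixed, verifying the hardness reductions is comparatively routine bookkeeping.
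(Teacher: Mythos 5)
First, a point of reference: the paper does not prove Theorem~\ref{t-1} at all. It is a survey statement importing results from \cite{FLPR23,LL23,LT22,LPR23a}, with the attribution of the individual cases deferred to \cite{LPR24}, so there is no in-paper argument to compare your plan against and it must stand on its own. (Note also that, as the paper states, Theorem~\ref{t-1} is \emph{not} a full dichotomy; your reduction to the maximal tractable graphs and the minimal hard graphs is nonetheless valid for the statement as written.)

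Judged on its own, the proposal has a concrete flaw on the hardness side and a substantive gap on the tractable side. For hardness, you propose a single gadget family that is \emph{simultaneously} subcubic (hence $K_{1,4}$-free), of large girth, and free of induced $P_{14}$. No such family of unbounded size exists: a connected subcubic graph on $n$ vertices contains a (not necessarily induced) path of length $\Omega(\log n)$ via any DFS tree, and since it contains no $K_{4,4}$ subgraph, the theorem of Atminas, Lozin and Razgon then forces a long \emph{induced} path once $n$ is large enough; hence connected subcubic $P_{14}$-free graphs have constantly many vertices and cannot encode \NP-hard instances. This is why the literature uses genuinely different constructions for the two regimes: sparse, bounded-degree, high-girth gadgets for $K_{1,4}$, $C_r$ and $H_j^*$, versus dense constructions containing large cliques for $P_{14}$, $2P_7$ and $3P_6$. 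Your closing remark about per-case instantiation gestures at this, but the ``three properties simultaneously'' step as stated would fail. On the tractable side, the ``bounded-size dominating structure'' you intend to branch over is not available: connected $P_6$-free graphs are only guaranteed a dominating clique or a dominating induced $C_6$, and the clique may be arbitrarily large (one can argue that a large clique must be monochromatic in any perfect red-blue $1$-colouring, which recovers constant branching, but that observation is merely the first step of the rather intricate algorithms of \cite{LT22,LPR23a}); for $S_{1,2,2}$-free graphs no comparable dominating structure exists at all, and the $sP_4$ prefix additionally requires an induction on $s$ in the spirit of Lemma~\ref{lem:mc-induction}. So the part you yourself flag as ``the bulk of the work'' is exactly where the proof lives, and the sketch does not yet contain the ideas needed to carry it out.
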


\begin{theorem}[\cite{LPR24}]\label{t-2}
%For a graph~$H$, 
\mmc\ on $H$-free graphs is polynomial-time solvable if $H\ssi sP_2+P_6$ for some $s\geq 0$; and \NP-complete otherwise.
\end{theorem}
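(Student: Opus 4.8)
The plan is to prove the two directions separately, after first pinning down the boundary. A graph $H$ satisfies $H\ssi sP_2+P_6$ for some $s\geq 0$ precisely when $H$ is a linear forest with at most one component on at least three vertices, that component having at most six vertices. Hence, if $H$ is \emph{not} of this form, then $H$ contains, as an induced subgraph, one of the minimal obstructions $K_{1,3}$, $2P_3$, $P_7$, or $C_r$ for some $r\geq 3$: a cycle appears whenever $H$ is not a forest, a claw whenever $H$ is a forest of maximum degree at least three, a $2P_3$ whenever $H$ is a linear forest with two components on at least three vertices, and a $P_7$ whenever the unique large component is too long. Since $F\ssi H$ implies that every $F$-free graph is $H$-free, for the \NP-completeness direction it suffices to prove that \mmc\ is \NP-complete on $F$-free graphs for each obstruction $F$; the resulting hard instances are then automatically $H$-free. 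Membership in \NP\ is immediate by guessing the red/blue colouring.

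For the cycle obstructions I would invoke Theorem~\ref{t-0}: for every $r\geq 3$, \mc\ is \NP-complete on $C_r$-free graphs, and since a connected graph has a matching cut if and only if it has a matching cut of size at least one, setting $k=1$ yields a trivial class-preserving reduction from \mc\ to \mmc, settling all $C_r$. The genuinely new work lies in the three forest obstructions $K_{1,3}$, $2P_3$ and $P_7$: by Theorem~\ref{t-0} each of these lies strictly inside the polynomial region for \mc\ (for instance, \mc\ is polynomial on claw-free graphs), so no reduction from \mc\ can help and the maximisation intrinsic to \mmc\ must be exploited. Here I would reduce from a variant of \nae, building variable- and clause-gadgets whose only sufficiently large matching cuts correspond to not-all-equal satisfying assignments, while engineering the gadgets and their interconnections to avoid the relevant induced subgraph (using long induced paths and local clique-like structure to destroy claws and $2P_3$'s, and capping induced-path lengths to destroy $P_7$'s). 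Designing three gadget families that simultaneously force the intended cut size and respect the forbidden-subgraph constraint is, I expect, the main obstacle.

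For the polynomial direction I would first solve \mmc\ on $P_6$-free graphs and then lift the algorithm to $(sP_2+P_6)$-free graphs. On a connected $P_6$-free graph I would exploit its domination properties (a dominating subgraph of bounded diameter): a matching cut is a red/blue partition in which every vertex has at most one oppositely coloured neighbour, so once the colours on a structured core are fixed, most remaining colours are \emph{forced} by this constraint. I would therefore branch over polynomially many partial colourings of the core, propagate the forced colours, and on the unforced remainder solve an auxiliary maximisation (e.g.\ a matching/assignment problem) to maximise the number of cut edges and compare with $k$. To pass from $P_6$-free to $(sP_2+P_6)$-free I would apply the standard $sP_2$-removal technique: either the graph is already of bounded size, or one finds a bounded-size vertex set whose deletion (with its neighbourhood) leaves a $P_6$-free graph, and one branches over its colourings before invoking the $P_6$-free routine. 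The delicate point on this side is that \mmc\ demands the \emph{maximum} cut, so the propagation must control not only feasibility but also the contribution of the unforced vertices to the cut size; this extra bookkeeping is precisely why the polynomial region for \mmc\ is smaller than that for \mc.
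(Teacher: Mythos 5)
First, note that the paper does not prove Theorem~\ref{t-2} at all: it is quoted from~\cite{LPR24}, so there is no in-paper proof to compare against. Your structural skeleton is correct and matches how the dichotomy is actually organised in that reference. The boundary characterisation ($H\ssi sP_2+P_6$ for some $s$ iff $H$ is a linear forest with at most one component on at least three vertices, that component having at most six vertices) is right, as is the resulting minimal obstruction set $\{C_r\ (r\geq 3),\, K_{1,3},\, 2P_3,\, P_7\}$. Your two key observations are also correct: the $C_r$ cases follow from the \NP-completeness of \mc\ on $C_r$-free graphs (Theorem~\ref{t-0}) via the trivial $k=1$ reduction, whereas $K_{1,3}$, $2P_3$ and $P_7$ all lie inside the polynomial region of Theorem~\ref{t-0} for \mc, so the hardness proofs for these three classes must genuinely exploit the maximisation. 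This is precisely why the polynomial region shrinks from \mc\ to \mmc.

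However, the proposal stops exactly where the work of~\cite{LPR24} begins. The three new \NP-hardness constructions (for $K_{1,3}$-free, $2P_3$-free and $P_7$-free graphs) and the polynomial-time algorithm for $(sP_2+P_6)$-free graphs constitute essentially the entire technical content of the theorem, and you supply neither the gadgets nor the algorithm --- only the intention to ``build gadgets avoiding the relevant induced subgraph'' and to ``exploit domination properties''. In particular: (a) you give no evidence that \nae-style gadgets forcing a prescribed cut size while avoiding a claw, or avoiding $2P_3$, or avoiding $P_7$, exist at all; designing such gadgets (and verifying $H$-freeness of the full construction, not just of individual gadgets) is the hard part, and you yourself flag it as ``the main obstacle''; (b) on the polynomial side, the claim that after branching on a bounded core and propagating forced colours the residual maximisation reduces to a tractable matching/assignment problem is exactly the content of the technical lemma of~\cite{LPR24} that this paper itself invokes (Lemma~21, used in the proof of Lemma~\ref{lem:mc-induction}); it needs a proof, especially since the uncoloured remainder of a $P_6$-free graph is not obviously of a shape where the cut-size bookkeeping is easy; and (c) the lift from $P_6$-free to $(sP_2+P_6)$-free requires a precise branching lemma in the spirit of Lemma~\ref{lem:mc-induction}, which you only gesture at. So the submission is a correct and well-informed roadmap, but not a proof.
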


\noindent
{\bf Our Results.}
We combine \NP-completeness results from Theorems~\ref{t-0}--\ref{t-2} with new polynomial and hardness results (shown in Section~\ref{s-poly} and~\ref{s-np}, resp.) to prove: 

\begin{theorem}\label{t-dicho}
For a graph~$H$, the following four complete dichotomies hold:\\[-15pt]
\begin{itemize}
\item {\sc $1$-Cut}, \pmc\ and \mmc\ on partitioned probe $H$-free graphs are  polynomial-time solvable if $H\ssi sP_1+P_4$ for some $s\geq 0$; and \NP-complete otherwise;\\[-8pt]
\item for $d\geq 2$, \dcut\ on partitioned probe $H$-free graphs is polynomial-time solvable if $H\ssi P_1+P_4$; and \NP-complete otherwise.
\end{itemize}
\end{theorem}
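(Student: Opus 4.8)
The plan is to prove the positive (polynomial-time) and negative (\NP-complete) halves separately, reusing the known dichotomies of Theorems~\ref{t-0}--\ref{t-2} wherever the $H$-free problem is already hard, and supplying fresh algorithms and reductions only for the cases that are genuinely new in the probe model.

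\textbf{Positive side.} For the tractable cases I would first isolate the structure of a connected partitioned probe $(sP_1+P_4)$-free graph $(G,P,N)$. Fixing a certifying set $F\ssi\binom{N}{2}$ with $G+F$ being $(sP_1+P_4)$-free, I would distinguish two regimes. Either $G+F$ is $P_4$-free, hence a cograph, and its cotree/modular decomposition makes all four cut problems routine (with the non-probe edges in $F$ tracked explicitly); or $G+F$ contains an induced $P_4$, say $Q$, which must then be dominating in the sense that the set $W$ of vertices non-adjacent to all of $Q$ has independence number at most $s-1$, since otherwise $s$ pairwise non-adjacent vertices of $W$ together with $Q$ would induce $sP_1+P_4$. (For $d\ge 2$ the relevant bound is $s=1$, so $W=\emptyset$ and $Q$ dominates everything, which is especially clean.) On top of this structure I would run the standard red/blue template for \dcut: branch over the $O(1)$ colourings of the seed $Q$ plus a bounded set of representatives from $W$, propagate forced colours using the rule that a vertex with at least $d+1$ neighbours already coloured alike must take that same colour, and finish by reducing the still-undetermined vertices to a polynomially solvable instance --- a $2$-list-colouring/$2$-\textsc{Sat} instance for $d=1$, and a bounded-degree matching or flow computation for \pmc\ and \mmc. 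The probe aspect enters only through the non-probes: because $N$ is independent in $G$, a non-probe constrains, and is constrained by, its probe-neighbours alone, so the unknown edges $F$ never change which edges of $G$ lie in the cut.

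\textbf{Negative side.} Whenever $H$ already makes the $H$-free problem \NP-complete --- in particular when $H\si C_r$ for some $r\ge 3$, or (for $d\ge 2$) when $H\si K_{1,3}$, $3P_2$, or $H_i^*$ --- the hardness transfers verbatim, since every $H$-free graph is partitioned probe $H$-free with $N=\emptyset$. It then remains to treat the graphs $H$ that are easy in the $H$-free setting but fall outside the probe-tractable region. For $d=1$ the minimal such obstructions are $P_5$, $2P_2$ and $K_{1,3}$; for $d\ge 2$ there are additional ones forced by allowing only a single isolated vertex, namely graphs of the form $2P_1+P_3$, $3P_1+P_2$, and the like, with $P_5$ and $2P_2$ again present, while all larger $H$ inherit hardness. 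For each such $H$ I would give a reduction, most naturally from \nae\ or from a matching-cut-hard base problem, engineered so that a large block of the constructed vertices is placed in the non-probe set $N$. The crucial trick is that the internal edges of $N$ are invisible to $G$: choosing $F$ to be, say, a clique or a $P_4$-free completion on $N$ destroys every potential induced copy of $H$ in $G+F$, certifying partitioned probe $H$-freeness, while a $d$-cut of $G$ still encodes a valid not-all-equal assignment.

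\textbf{Main obstacle.} I expect the delicate part to be the positive side, and specifically the interaction between the unknown certifying edges $F$ and the propagation argument. Although $F$ does not affect the cut of $G$, I only have structural control over $G+F$, so each structural fact the algorithm relies on --- the dominating $P_4$, the bounded independence number of $W$, the cograph decomposition --- must be shown to survive the passage from $G+F$ back to $G$ after deleting $F\ssi\binom{N}{2}$. Keeping the branching provably polynomial while the remainder $W$ of bounded independence number may itself be large, so that it cannot simply be enumerated, is where the real work lies; the hardness reductions, by contrast, form a finite case analysis whose only recurring subtlety is exhibiting, in each construction, an explicit certifying completion $F$.
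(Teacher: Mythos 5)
Your overall architecture matches the paper's: polynomial algorithms for the $sP_1+P_4$ / $P_1+P_4$ cases via seed-guessing plus colour propagation, and hardness via reductions in which a designated independent block is placed in $N$ and completed by $F$ (this is exactly how the paper handles probe split / probe $2P_2$-free graphs in Theorem~\ref{t-split}, probe $4P_1$-free graphs in Theorem~\ref{t-4p1}, and probe $K_{1,3}$-free graphs in Theorem~\ref{t-claw}). However, the positive side has a genuine gap. First, your case distinction is made on $G+F$, but $F$ is not part of the input: you cannot decide whether $G+F$ is $P_4$-free, cannot compute its cotree, and cannot locate an induced $P_4$ of $G+F$, which may in any case contain non-probes whose adjacencies in $G$ differ from those in $G+F$. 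The paper instead distinguishes on whether $G[P]$ contains an induced $P_4$; choosing the seed $Q$ inside $P$ is essential because a probe's neighbourhood is identical in $G$ and $G+F$, which is what lets one argue (as in Lemma~\ref{lem:mc-induction}) that every vertex left uncoloured after processing must be a non-probe and hence that the residue is independent in $G$. Second, and more seriously, the case you dismiss as ``routine'' --- no dominating $P_4$ available, so the probe part is a cograph --- is where essentially all of the work lies for $d\geq 2$: $G$ itself is not a cograph there, so no cotree or modular decomposition of $G$ exists, and the paper needs Lemma~\ref{lem:cographBoundedColourClass} (in any red-blue $d$-colouring of a connected cograph some colour class has size at most $2d$) together with a multi-page case analysis on the number of components of $G[P]$ and a four-way classification of the non-probes (Theorem~\ref{t-dc}) to keep the branching polynomial. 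Your sketch contains no substitute for that lemma and no mechanism for bounding the branching once the dominating-$P_4$ assumption fails.

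Two smaller points. For \mmc\ the residual independent set cannot simply be fed to a $2$-{\sc Sat} instance, since one must maximize the number of non-monochromatic edges; the paper invokes a dedicated lemma from~\cite{LPR24} for this. On the hardness side your list of new minimal obstructions for $d\geq 2$ should be reduced to $2P_2$ and $4P_1$: the graphs $2P_1+P_3$ and $3P_1+P_2$ that you name both contain $4P_1$ as an induced subgraph, so a single reduction for probe $4P_1$-free graphs (the paper's Theorem~\ref{t-4p1}, a monotone {\sc $3$-Satisfiability} construction whose variable set becomes $N$) suffices, and no separate reductions for those graphs are needed.
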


\noindent
From Theorems~\ref{t-0} and~\ref{t-dicho}, it follows that \dcut\ becomes harder for {\em probe} $H$-free graphs (if ${\sf P} \neq \NP$) even if $H=2P_2$ for $d\geq 1$ and $H=4P_1$ for $d\geq 2$.

\section{Preliminaries and Basic Results}\label{s-prelim}

%Throughout the paper, we only consider finite, undirected graphs without multiple edges and self-loops. We first define some general graph terminology.

Let $G=(V,E)$ be a graph.
We let $N_G(v) = \{u \in V~|~uv \in E\}$ be the \emph{(open) neighbourhood} of~$v$ and $N_G[v] = N_G(v) \cup \{v\}$ be the \emph{closed neighbourhood} of~$v$. Let $S\subseteq V$. We write $G[S]$ to denote the subgraph of $G$ induced by $S$. A vertex $v \notin S$ is \emph{complete} to $S$ if $N(v) \supseteq S$, and $v$ is \emph{anti-complete} to $S$ if $N(v)\cap S=\emptyset$. Let $S'\subseteq V$ with $S'\cap S=\emptyset$.
If every vertex of $S$ is complete (anti-complete) to $S'$, then $S$ is {\it complete} ({\it anti-complete}) to $S'$. 

In our paper we also define some other probe graph classes. For example, we may say that a graph $G$ with a set $P$ of probes and a set $N$ of non-probes is {\em probe split} if there exists a set $F\subseteq \binom{N}{2}$ such that $G+F$ is a split graph (a graph whose vertex set can be partitioned into a clique and an independent set).

We now recall some colouring terminology for $d$-cuts from~\cite{LMPS24}  
that is commonly used in the context of matching cuts (see, e.g.~\cite{LPR22}). 
A {\it red-blue colouring} of a graph $G$ colours every vertex of $G$ either red or blue.
For $d\geq 1$, a red-blue colouring is a {\it red-blue $d$-colouring} if every blue vertex has at most $d$ red neighbours, every red vertex has at most $d$ blue neighbours, and $G$ has at least one blue vertex and at least one red vertex. See Figure~\ref{fig-t-d-cut} for red-blue $d$-colourings for $d=2$ and $d=3$. For some $d\geq 1$, a red-blue $d$-colouring is {\em perfect} if and only if every red vertex has exactly $d$ blue neighbours and vice versa.
This gives us the following straightforward observation (in the case of perfectness we focus on $d=1$: a perfect $1$-cut is a perfect matching cut).

\begin{observation}[\cite{LMPS24}]\label{o-cut-colouring}
For every $d\geq 1$, a connected graph $G$ has a (perfect) $d$-cut if and only if $G$ has a (perfect) red-blue $d$-colouring.
\end{observation}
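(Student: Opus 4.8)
The plan is to observe that the statement is essentially a reformulation of one definition in terms of the other, so the proof amounts to exhibiting the natural correspondence between the two partitions of $V(G)$ and checking that the defining conditions match up. I would argue both directions of the equivalence by passing back and forth between a partition $(B,R)$ of $V(G)$ and the colouring that assigns blue to $B$ and red to $R$.

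First I would treat the non-perfect case. Given a $d$-cut $M$ of $G$, by definition there is a partition of $V(G)$ into two non-empty sets $B$ and $R$ such that $M$ is exactly the set of edges with one end in $B$ and the other in $R$, and such that every vertex of $B$ has at most $d$ neighbours in $R$ and vice versa. Colouring the vertices of $B$ blue and those of $R$ red then yields a red-blue colouring in which every blue vertex has at most $d$ red neighbours, every red vertex has at most $d$ blue neighbours, and, since $B,R\neq\emptyset$, there is at least one vertex of each colour; this is precisely a red-blue $d$-colouring. Conversely, given a red-blue $d$-colouring, I would let $B$ and $R$ be the sets of blue and red vertices, which are both non-empty by definition, and let $M$ be the set of edges between them. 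Since $G$ is connected and both $B$ and $R$ are non-empty, $M$ is non-empty, so $M$ is a genuine edge cut; the degree bounds of the colouring are exactly the conditions required for $M$ to be a $d$-cut.

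For the perfect case (with $d=1$), the identical correspondence applies, now reading ``exactly $d$'' in place of ``at most $d$'' on both sides: a perfect red-blue $1$-colouring is one in which every vertex has exactly one neighbour of the opposite colour, which is the same as saying that the associated edge cut $M$ is a perfect matching, i.e.\ a perfect matching cut.

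The only point requiring any care, and the reason connectivity is assumed, is in the backward direction: one must ensure that the two non-empty colour classes actually induce a non-empty edge cut. Connectedness of $G$ guarantees this, since a connected graph cannot be split into two non-empty parts with no edge between them. Beyond this single observation the argument is a direct translation between the two formulations, so I would not expect any genuine obstacle.
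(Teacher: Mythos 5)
Your proof is correct and matches the paper's treatment: the paper presents this as an immediate consequence of the definitions (citing~\cite{LMPS24} and calling it a ``straightforward observation''), and your argument is exactly the intended definitional correspondence between the bipartition underlying a $d$-cut and the two colour classes, including the remark that connectivity is what makes the induced edge cut non-empty in the backward direction. No issues.
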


Let $d\geq 1$. Let $G=(V,E)$ be a connected graph and $X, Y \subseteq V$ be disjoint sets. A \emph{red-blue $(X, Y)$-$d$-colouring} of $G$ is a red-blue $d$-colouring of $G$ that colours all the vertices of $X$ red and all the vertices of $Y$ blue. 
We say that $(X,Y)$ is a {\it $d$-precoloured pair} of $G$. We will usually ``guess'' such a pair $(X,Y)$ as the starting point in our algorithms.
On a $d$-precoloured pair $(X,Y)$ we can safely apply the following two rules exhaustively. 

\begin{enumerate}[\bfseries{R1.}]
\item Return {\tt no} (i.e. $G$ has no red-blue $(X,Y)$-$d$-colouring) if a vertex $v\in V$ is adjacent to $d+1$ vertices in $X$ as well as to $d+1$ vertices in $Y$.
\item Let $v\in V\setminus (X\cup Y)$. If $v$ is adjacent to $d+1$ vertices in $X$, then put $v$ in $X$, and if $v$ is adjacent to $d+1$ vertices in $Y$, then put $v$ in $Y$.
\end{enumerate}

\noindent
Afterwards, we either returned {\tt no}, or we obtained two new sets $X'\supseteq X$ and $Y'\supseteq Y$. In the latter case we say that we have {\it colour-processed} $(X,Y)$ {\it into} $(X',Y')$. By construction, every vertex of $V\setminus (X'\cup Y')$ is adjacent to at most $d$ vertices of $X'$ and to at most $d$ vertices of~$Y'$. 
The next lemma shows that we can work safely with $(X',Y')$ instead of $(X,Y)$.

\begin{lemma}[\cite{LMPS24}]\label{l-process}
Let $G$ be a connected graph with a precoloured pair $(X,Y)$. It is possible, in polynomial time, to either colour-process $(X,Y)$ into a pair $(X',Y')$ such that $G$ has a red-blue $(X,Y)$-$d$-colouring if and only if it has a red-blue $(X',Y')$-$d$-colouring, 
or to find that $G$ has no red-blue $(X,Y)$-$d$-colouring.
\end{lemma}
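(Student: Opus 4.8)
The plan is to show that each of the two rules is \emph{sound}—applying it neither destroys nor creates a valid colouring—and then to bound the number of applications. The key to the soundness argument is a single invariant maintained throughout the exhaustive application of R1 and R2: at every intermediate stage, \emph{every} red-blue $(X,Y)$-$d$-colouring $c$ of $G$ (should one exist) colours every vertex currently in $X$ red and every vertex currently in $Y$ blue. I would prove this by induction on the number of rule applications. The base case holds by the definition of a red-blue $(X,Y)$-$d$-colouring. For the inductive step, suppose R2 moves a vertex $v$ into $X$ because $v$ has $d+1$ neighbours in the current set $X$; by the inductive hypothesis these $d+1$ neighbours are all red under $c$, so if $c$ coloured $v$ blue then $v$ would be a blue vertex with $d+1$ red neighbours, contradicting that $c$ is a $d$-colouring. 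Hence $c$ colours $v$ red, and the invariant is preserved; the case where $v$ is moved into $Y$ is symmetric.

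With the invariant in hand, both conclusions of the lemma follow quickly. If rule R1 ever fires, some vertex $v$ is adjacent to $d+1$ vertices in $X$ and to $d+1$ vertices in $Y$; by the invariant, any red-blue $(X,Y)$-$d$-colouring would force $v$ to have $d+1$ red neighbours and $d+1$ blue neighbours at once, which is impossible whichever colour $v$ receives, so returning {\tt no} is correct. If instead the process terminates with a pair $(X',Y')$, then the inclusions $X\subseteq X'$ and $Y\subseteq Y'$ immediately show that every red-blue $(X',Y')$-$d$-colouring is in particular a red-blue $(X,Y)$-$d$-colouring; the invariant supplies the converse, since any red-blue $(X,Y)$-$d$-colouring colours all of $X'$ red and all of $Y'$ blue and is therefore a red-blue $(X',Y')$-$d$-colouring. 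This establishes the claimed equivalence.

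For the running time I would observe that each application of R2 moves a vertex out of $V\setminus(X\cup Y)$ into $X$ or $Y$, so there are at most $|V|$ applications in total, and that testing the neighbourhood conditions of R1 and R2 over all vertices takes polynomial time per round; hence the entire colour-processing runs in polynomial time. I do not anticipate a genuine obstacle: the only point demanding care is that R2 is applied iteratively while $X$ and $Y$ grow, so soundness cannot be checked once at the end but must be formulated as an invariant preserved across all intermediate stages—precisely what the induction above delivers. A minor bookkeeping remark is that, once R1 no longer applies, no vertex satisfies both triggering conditions of R2 simultaneously, so the two relocations in R2 are never in conflict.
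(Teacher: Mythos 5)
Your proof is correct and is essentially the standard argument: the paper itself gives no proof of this lemma, citing it from~\cite{LMPS24}, and your inductively maintained invariant (every red-blue $(X,Y)$-$d$-colouring must colour each vertex added to $X$ red and each vertex added to $Y$ blue), together with the trivial reverse inclusion $X\subseteq X'$, $Y\subseteq Y'$ and the linear bound on rule applications, is exactly the intended justification. No gaps.
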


\section{Polynomial-Time Results}\label{s-poly}

In this section, we show our polynomial-time results for \mmc, \pmc\ and \dcut\ $(d\geq 1)$. That is, we show that \mmc\ (and thus $1$-{\sc Cut}) and \pmc\ are polynomial-time solvable on $(sP_1+P_4)$-free graphs, and {\sc $d$-Cut}, for $d\geq 2$, is polynomial-time solvable on $(P_1+P_4)$-free graphs. 

Our proofs are based on combining colour-processing with the observation that we can guess the closed neighbourhood of any set of size at most some constant~$c$: this will lead to only $\mathcal{O}(2^cn^{cd})$ branches, due to the fact that any vertex can have at most $d$ neighbours of the opposite colour. In our algorithms we choose a constant number of constant-sized sets in such a way that afterwards the whole input graph is coloured.

For our first result we need the following lemma.

\begin{lemma}\label{lem:mc-induction}
    For a graph $H$, the following two statements hold:
    \begin{enumerate}[(i)]
        \item If \mmc\ is polynomial-time solvable on partitioned probe $H$-free graphs, then
         it is polynomial-time solvable on partitioned probe $(P_1+H)$-free graphs.\\[-8pt]

        \item If \pmc\ is polynomial-time solvable on partitioned probe $H$-free graphs, then it is polynomial-time solvable on partitioned probe $(P_1+H)$-free graphs.    
    \end{enumerate}
\end{lemma}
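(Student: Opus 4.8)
The plan is to prove both parts by a single reduction and treat them uniformly. By Observation~\ref{o-cut-colouring} it suffices to search for red-blue $1$-colourings of $G$; for \mmc\ I additionally keep track of the number of cut edges, and for \pmc\ of perfectness. As throughout this framework, I would work with the precoloured-pair formulation: an instance carries a pair $(X,Y)$ and I look for a red-blue $(X,Y)$-$1$-colouring, since this is the form in which the assumed algorithm for partitioned probe $H$-free graphs will be invoked.

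The engine of the reduction is the following structural fact, which I would isolate first. Fix any probe $v\in P$ and put $A=V\setminus N_G[v]$. Because $v$ is a probe, no edge of a certifying set $F\subseteq\binom{N}{2}$ is incident with $v$, so $N_{G+F}[v]=N_G[v]$ and $v$ is anti-complete to $A$ in $G+F$. Were $(G+F)[A]$ to contain an induced $H$, adding $v$ would produce an induced $P_1+H$, contradicting that $G+F$ is $(P_1+H)$-free; hence $(G[A],P\cap A,N\cap A)$, certified by $F\cap\binom{N\cap A}{2}$, is a partitioned probe $H$-free graph. This is exactly the object on which I may use the hypothesis.

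With this in hand, the algorithm first disposes of the degenerate cases ($P=\emptyset$, where $G$ is edgeless, and $N_G[v]=V$). Otherwise I fix one probe $v$ and \emph{guess the closed neighbourhood of} $\{v\}$: I branch over the colour of $v$ and over its at most one opposite-coloured neighbour---only $\mathcal{O}(n)$ branches, since a matching cut gives each vertex at most one opposite-coloured neighbour---which pins down the colour of every vertex of $N_G[v]$. I then add these colours to $(X,Y)$ and colour-process via Lemma~\ref{l-process}, after which every still-uncoloured vertex lies in $A$. Finally I call the assumed algorithm on this residual precoloured instance, whose uncoloured part induces a partitioned probe $H$-free graph, and return the best outcome over all branches; the \pmc\ case is identical except that $v$ now has \emph{exactly} one opposite-coloured neighbour and I verify perfectness.

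The hard part will be the interface between the coloured set $N_G[v]$ and the probe $H$-free remainder $A$, and making the hypothesis applicable across it. A coloured boundary vertex $b\in N_G(v)$ has a budget of one opposite-coloured neighbour; if this budget is already spent inside $N_G[v]$, then every uncoloured neighbour of $b$ in $A$ is forced to $b$'s colour, while if it is unspent, then $b$ constrains its uncoloured neighbours in $A$ through its own budget. I would absorb the first situation into colour-processing by the safe rule that a saturated coloured vertex forces all of its uncoloured neighbours to its own colour, and I would handle the second by carrying the relevant coloured boundary vertices into the precoloured pair passed to the subroutine, so that the residual problem is precisely a precoloured-pair matching-cut instance whose uncoloured part is partitioned probe $H$-free. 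Checking that this residual instance is equivalent to the original---so that a solution found by the subroutine extends to a valid colouring of $G$ of the right size (resp.\ to a perfect matching cut)---is the crux, and it is where the precoloured-pair formulation of the hypothesis is essential.
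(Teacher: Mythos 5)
There is a genuine gap, and you have correctly located it yourself: the ``crux'' you defer to at the end is exactly the step that does not go through. The hypothesis of the lemma is that \mmc\ (resp.\ \pmc) is polynomial-time solvable on partitioned probe $H$-free graphs --- the \emph{plain} problem, not a precoloured-pair or boundary-constrained version. After you guess the colouring of $N_G[v]$, the residual task on $A=V\setminus N_G[v]$ is not an instance of \mmc\ on $G[A]$: the colouring of $A$ may legitimately be monochromatic (so it need not be a matching cut of $G[A]$ at all), $G[A]$ may be disconnected, the objective must count the cross edges between $A$ and $N_G(v)$, and a vertex of $A$ whose budget is already spent on a coloured neighbour in $N_G(v)$ is forbidden from having any opposite-coloured neighbour inside $A$. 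None of this is expressible as a call to the assumed algorithm. To make your induction work you would have to strengthen the statement being proved (and the base case, Lemma~\ref{lem:SPMC-p4}) to an annotated precoloured version, which is a genuinely different and heavier lemma than the one stated. Your structural observation --- that $(G[A],P\cap A,N\cap A)$ is partitioned probe $H$-free because $v\in P$ is anti-complete to $A$ in $G+F$ --- is correct and nice, but it feeds into a subroutine you do not have.

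The paper avoids recursion on a constrained subinstance entirely. It splits into two cases: if $(G,P,N)$ happens already to be partitioned probe $H$-free, it invokes the hypothesis on the \emph{whole} graph; otherwise $G+F$ contains an induced copy $Q$ of $H$, and it branches over the $\mathcal{O}(n^{|H|})$ colourings of $N_G[Q]$. The key point is then that every uncoloured vertex must be a non-probe (an uncoloured probe $u$ outside $N_G[Q]$ would give $Q\cup\{u\}$ inducing $P_1+H$ in $G+F$), so the uncoloured vertices form an \emph{independent set}, and the instance is finished by a concrete subroutine (Lemma~21 of~\cite{LPR24} for \mmc, a perfect-matching computation for \pmc) rather than by the inductive hypothesis. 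If you want to keep your single-vertex decomposition, you would need to supply and prove the annotated form of the hypothesis; as written, the proof is incomplete at precisely the step you flag as the crux.
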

    
\begin{proof}
We first prove (i). Suppose that \mmc\ is polynomial-time solvable on partitioned probe $H$-free graphs.
    Let $G=(V,E)$ be a partitioned probe $(P_1+H)$-free graph with a set $P \subseteq V$ of probes and a set $N = V\setminus P$ of non-probes. 
    If there is an edge subset $F'\subseteq \binom{N}{2}$ such that 
    $G+F'$ is $H$-free, then we can solve \mmc\ on $(G,P,N)$ in polynomial time by assumption. 
    Now suppose that for every edge subset $F'\subseteq \binom {N}{2}$, there is some $Q_{F'} \subseteq V(G)$ such that $G[Q_{F'}]$ is isomorphic to $H$ in $G+F'$. As $G$ is probe $(P_1+H)$-free, there exists an edge subset $F\subseteq \binom{N}{2}$ such that $G+F$ is $(P_1+H)$-free. We let $Q=Q_F$.

    For any red-blue $1$-colouring, of $G$, every vertex $v \in V(G)$ has at most one neighbour with a different colour than $v$, so there are $\mathcal{O}(n)$ red-blue colourings of $N_{G}[v]$. We branch on all $\mathcal{O}(n^{|H|})$ red-blue colourings of the closed neighbourhood of $Q$. As $H$ is a fixed graph, the number of branches is polynomial. Moreover,
the red-blue colouring in each branch corresponds to a pair $(X,Y)$ of $G$ in which every vertex of $X$ is red and every vertex of $Y$ is blue. Our goal is to find, 
for each such pair $(X,Y)$, a red-blue $(X,Y)$-$1$-colouring of $G$ that has as many edges with both a blue and a red end-vertex as possible; we say that such edges are {\it non-monochromatic}. We take the maximum number of non-monochromatic edges over all pairs $(X,Y)$ as our output. By Observation~\ref{o-cut-colouring}, we find a maximum matching cut in this way (should $G$ have a matching cut).

Let $(X,Y)$ be the pair corresponding to a red-blue colouring of the closed neighbourhood of $Q$.
By Lemma~\ref{l-process}, we can colour-process $(X,Y)$ into a pair $(X',Y')$ such that afterwards we either find that $G$ has no red-blue $(X,Y)$-$1$-colouring, or we may consider the pair $(X',Y')$ instead. In the former case, we discard the pair $(X,Y)$. Suppose that we are in the latter case. Note that from the nature of the rules {\bf R1} or {\bf R2}, any red-blue $(X,Y)$-$1$-colouring of $G$ with maximum number of non-monochromatic edges corresponds to a red-blue $(X',Y')$-$1$-colouring of $G$ with maximum number of non-monochromatic edges. 

A key observation is that any uncoloured vertex of $G$ belongs to $N$. This can be seen as follows.
If $u$ is an uncoloured vertex of $G$ that belongs to $P$, then $u$ is not in the closed neighbourhood of $Q$ (as otherwise we would have coloured $u$ already). As $u$ belongs to $P$, it follows that $u$ is not incident to an edge in $F$. We also recall that $G[Q]$ is isomorphic to $H$ in $G+F$. 
Hence, $Q\cup \{u\}$ induces a $P_1+H$ in $G+F$, contradicting the fact that $G+F$ is $(P_1+H)$-free.

As $N$ is an independent set of $G$, the above implies that the set of uncoloured vertices of $G$ is independent. This means that we may apply Lemma~21 from~\cite{LPR24} to find a red-blue $(X',Y')$-$1$-colouring with the maximum number of non-monochromatic edges in polynomial time.

\medskip
\noindent
We now prove (ii) by proceeding in exactly the same way as above. At some point, we must consider a pair ($X',Y')$, and we have a set of uncoloured vertices that is an independent set. 
Let $u$ be such an uncoloured vertex. Due to the colour-processing, $u$ has at most one blue neighbour and at most one red neighbour. If $u$ has only one neighbour, we must give $u$ the colour opposite to the colour of its neighbour. We check if the resulting red-blue colouring contains no vertex with two neighbours of the opposite colour. If that is the case, we discard the original pair $(X,Y)$ that corresponds to $(X',Y')$. 
Otherwise, we let $U$ be the remaining set of uncoloured vertices, which all have exactly one blue neighbour and exactly one red neighbour. It remains to check whether there exists a perfect matching~$M$ between the vertices of $U$ and its open neighbourhood in $G$. If so, we colour each vertex in $U$ blue if its matched neighbour in $M$ is red, and vice versa. We then check if the resulting red-blue colouring of $G$ is a perfect red-blue $1$-colouring. If not, we discard $(X,Y)$, and otherwise we are done. It takes polynomial time to find a perfect matching or to conclude that it does not exist. Hence, statement (ii) of the lemma has also been proven. \qed
\end{proof}

\noindent
We now prove a second lemma.

\begin{lemma}\label{lem:SPMC-p4}
    \mmc\ and \pmc\ are polynomial-time solvable on partitioned probe $P_4$-free graphs.
\end{lemma}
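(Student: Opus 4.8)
The plan is to reduce both problems on partitioned probe $P_4$-free graphs to the structural control we already have over $P_4$-free graphs, exploiting the fact that the non-probe set $N$ is independent and that we only need to understand red-blue $1$-colourings. Let $(G,P,N)$ be a partitioned probe $P_4$-free graph, so there is an edge set $F\subseteq\binom{N}{2}$ with $G+F$ being $P_4$-free. Recall that $P_4$-free graphs (cographs) have a very rigid decomposition: every induced subgraph on at least two vertices is either disconnected or the complement of a disconnected graph. The key structural tool I would extract is that a connected cograph on at least two vertices is a join of two smaller cographs; since a join $G_1 \ast G_2$ is connected with every vertex of $G_1$ adjacent to every vertex of $G_2$, a matching cut of a join is extremely constrained — essentially only $K_1 \ast K_1 = P_2$ admits a nontrivial one. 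So the plan is to first settle the problem for genuine $P_4$-free graphs $G+F$ using this join structure, and then argue that the unknown edges of $F$ do not actually give the colouring algorithm more freedom.

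First I would handle the base case where $G$ itself is $P_4$-free (i.e.\ we may take $F=\emptyset$). Here I would use the join decomposition: a connected cograph $G$ on at least three vertices is a join, and in any red-blue $1$-colouring, if $G=G_1\ast G_2$ with $|V(G_1)|,|V(G_2)|\ge 1$ and $|V(G)|\ge 3$, then one side of the join, say $G_1$, has at least two vertices, forcing every vertex of $G_2$ to have at most one neighbour of the opposite colour across a complete bipartite graph — this pins down the colouring almost entirely, leaving only a constant number of cases to resolve. Concretely I expect that for \mmc\ the only cographs admitting a matching cut are the trivial joins, and one can compute the maximum matching cut by brute-forcing the (constantly many) ways the colouring can interact with the top-level join, while for \pmc\ the rigidity is even stronger. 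This gives a polynomial-time algorithm on $P_4$-free graphs directly.

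Next, to pass to the \emph{probe} setting, I would follow the branching template already used in the proof of Lemma~\ref{lem:mc-induction}. Since $G[P]$ is $P_4$-free and $N$ is independent, I would guess a constant-sized precoloured pair $(X,Y)$ and colour-process it via Lemma~\ref{l-process} into $(X',Y')$. The crucial point is to argue that after colour-processing, the set of uncoloured vertices is independent, so that the machinery of Lemma~21 from~\cite{LPR24} (for \mmc) or a perfect-matching computation (for \pmc) can finish the colouring in polynomial time, exactly as in the $(P_1+H)$-free argument. Because the target class is $P_4$-free (not $P_1+H$-free for some inductively smaller $H$), the challenge is to identify a bounded set $Q$ of vertices whose neighbourhood, once coloured, forces every remaining vertex of $P$ to be coloured; here I would use that any uncoloured probe together with enough already-coloured structure would create an induced $P_4$ in $G+F$, using that uncoloured probes are incident to no edge of $F$.

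The main obstacle I anticipate is the interaction between the hidden edges $F$ and the matching-cut constraint: a non-probe $u\in N$ may have several neighbours in $P$, and across the hidden edges of $F$ the global graph $G+F$ is a cograph, but $G$ itself need not be, so I cannot directly invoke the cograph join decomposition on $G$. The resolution I would aim for is that the matching-cut condition only ever reads off neighbourhoods of single vertices (each vertex has at most one oppositely-coloured neighbour), and since the uncoloured vertices form an independent set in $G$, the hidden edges in $F$ between two uncoloured non-probes are irrelevant to whether a red-blue $1$-colouring of $G$ is valid. Thus the delicate step is verifying that the bounded guessing set $Q$ can be chosen so that colour-processing leaves only non-probes uncoloured; making this work uniformly — rather than case-splitting on the cograph structure of $G+F$ — is where I expect the real effort to lie.
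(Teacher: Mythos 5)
Your overall framework is the right one---guess the red-blue $1$-colouring of the closed neighbourhood of a bounded set $Q$, colour-process via Lemma~\ref{l-process}, argue that what remains uncoloured is an independent subset of $N$, and finish with Lemma~21 of~\cite{LPR24} (for \mmc) or a perfect-matching computation (for \pmc) exactly as in Lemma~\ref{lem:mc-induction}. But you stop precisely where the one essential idea is needed: you never identify $Q$, and you explicitly flag this as ``where the real effort lies.'' The paper's answer is short: a connected cograph on at least two vertices is a join, so $G+F$ has a \emph{dominating edge} $uv$ (pick one vertex from each side of the top-level join). Now if a vertex $w$ is anti-complete to $\{u,v\}$ in $G$ and $w\in P$, then $w$ is incident to no edge of $F$, so $w$ is anti-complete to $\{u,v\}$ in $G+F$ as well, contradicting domination. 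Hence every vertex outside $N_G[\{u,v\}]$ lies in $N$ and the uncoloured vertices form an independent set---no case analysis on the cograph structure and no separate ``forcing of the probes'' is needed, because all probes are already in $N_G[\{u,v\}]$. Since $F$ is unknown, one simply tries all $\mathcal{O}(n^2)$ candidate pairs $\{u,v\}$; validity of any colouring found is checked on $G$, so wrong guesses are harmless. Without this step your argument is a plan rather than a proof, so I would count it as having a genuine gap.

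Two secondary points. First, your base-case claim that among joins ``essentially only $K_1 * K_1=P_2$'' admits a matching cut is false: $P_3$ and every star $K_{1,n}=K_1 * nK_1$ are connected cographs with matching cuts. The correct statement (Lemma~\ref{lem:cographBoundedColourClass} with $d=1$) is that one colour class has size at most $2$; in any case the separate base case is unnecessary once the dominating-edge argument is in place, as it covers $F=\emptyset$ too. Second, for \pmc\ the paper also records an alternative route you did not consider: probe $P_4$-free graphs have clique-width at most $4$~\cite{CKKLP05}, and \pmc\ is MSO$_1$-expressible, so the meta-theorem of~\cite{CMR00} applies directly.
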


\begin{proof}
    Let $G$ be a probe $P_4$-free graph with a set $P\subseteq V$ of probes and a set $N=V\setminus P$ of non-probes. Hence, there exists an edge subset $F\subseteq \binom{N}{2}$ such that $G+F$ is $P_4$-free. 
    The latter implies that $G+F$ has a dominating 
    edge~$uv$, which is an edge such that every other vertex in $G+F$ is adjacent to at least one of $u,v$. The fact that such a dominating edge exists follows directly from using the alternative definition of $P_4$-free graphs as cographs.

    We now consider the edge $uv$ in $G+F$.
    As $F\subseteq \binom{N}{2}$, the only vertices that are anti-complete to $\{u,v\}$ in $G$ must belong to $N$, so these vertices form an independent set (as $N$ is independent). This means that we can use exactly the same arguments as in the proof of Lemma~\ref{lem:mc-induction}: we just replace $Q$ by $\{u,v\}$.
    
    For \pmc, we can proceed in a similar way as above. We can also use the fact that 
    probe $P_4$-free graphs have clique-width at most~$4$~\cite{CKKLP05}
    and {\sc Perfect Matching Cut} can be expressed in MSO$_1$, which means that we may apply the meta-theorem of Courcelle,  Makowsky and Rotics~\cite{CMR00}.\qed
\end{proof}

\noindent
We are now ready to prove our first main result of this section.

\begin{theorem}\label{t-mmc}
\mmc\ and \pmc\ are polynomial-time solvable on partitioned probe $(sP_1+P_4)$-free graphs for all $s\geq 0$.
\end{theorem}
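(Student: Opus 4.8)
The plan is to prove Theorem~\ref{t-mmc} by induction on $s$, using the two lemmas just established as the base case and the inductive step. The statement asserts polynomial-time solvability of both \mmc\ and \pmc\ on partitioned probe $(sP_1+P_4)$-free graphs for all $s\geq 0$, and the machinery is perfectly arranged for a clean induction: Lemma~\ref{lem:SPMC-p4} handles the base case $s=0$, where $sP_1+P_4 = P_4$, establishing that both problems are polynomial-time solvable on partitioned probe $P_4$-free graphs. Lemma~\ref{lem:mc-induction} then provides exactly the inductive step, since $(s+1)P_1 + P_4 = P_1 + (sP_1 + P_4)$, so taking $H = sP_1+P_4$ in that lemma tells us that polynomial-time solvability on partitioned probe $H$-free graphs lifts to polynomial-time solvability on partitioned probe $(P_1+H)$-free graphs.

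Concretely, I would first fix $s\geq 0$ and proceed by induction. For the base case $s=0$, I would observe that $0\cdot P_1 + P_4$ is just $P_4$, and invoke Lemma~\ref{lem:SPMC-p4} directly, which gives both \mmc\ and \pmc\ in polynomial time on partitioned probe $P_4$-free graphs. For the inductive step, I would assume that both problems are polynomial-time solvable on partitioned probe $(sP_1+P_4)$-free graphs for some $s\geq 0$, set $H=sP_1+P_4$, and note that $(s+1)P_1+P_4$ is isomorphic to $P_1+H$. Applying parts (i) and (ii) of Lemma~\ref{lem:mc-induction} with this choice of $H$ then yields polynomial-time algorithms for \mmc\ and \pmc\ respectively on partitioned probe $(P_1+H)$-free graphs, i.e.\ on partitioned probe $((s+1)P_1+P_4)$-free graphs, completing the induction.

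The only subtlety I would flag is that the induction must be applied to a fixed integer $s$: since each application of Lemma~\ref{lem:mc-induction} multiplies the running time by a polynomial factor (the $\mathcal{O}(n^{|H|})$ branching over colourings of the closed neighbourhood of the copy of $H$), iterating it $s$ times keeps the total running time polynomial precisely because $s$ is a constant that is not part of the input. So there is essentially no obstacle here beyond bookkeeping — the real work has already been discharged inside the two lemmas, where the key ideas (guessing the colouring of a bounded-size vertex set, colour-processing via Lemma~\ref{l-process}, and reducing to a matching problem on the remaining independent set of uncoloured vertices) are carried out. The theorem is therefore a direct assembly: base case from Lemma~\ref{lem:SPMC-p4}, inductive step from Lemma~\ref{lem:mc-induction}, and the observation $(s+1)P_1+P_4 = P_1+(sP_1+P_4)$ to connect them.
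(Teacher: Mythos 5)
Your proposal is correct and matches the paper's proof exactly: the authors also obtain the theorem by combining Lemma~\ref{lem:SPMC-p4} as the base case with repeated applications of Lemma~\ref{lem:mc-induction}, using the identity $(s+1)P_1+P_4=P_1+(sP_1+P_4)$. Your additional remark about $s$ being a fixed constant is a valid (and correct) elaboration of why the iterated polynomial blow-up is harmless.
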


\begin{proof}
By combining Lemma~\ref{lem:SPMC-p4} with repeated applications of Lemma~\ref{lem:mc-induction} the theorem follows. \qed
\end{proof}

\noindent
To prove our next result, we first show a lemma. 

\begin{lemma}\label{lem:cographBoundedColourClass}
    For any red-blue $d$-colouring of a connected $P_4$-free graph, some colour class has size at most $2d$.
\end{lemma}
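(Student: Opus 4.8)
The plan is to exploit the recursive (cograph) structure of $P_4$-free graphs. Since a red-blue $d$-colouring requires at least one vertex of each colour, the graph $G$ has at least two vertices; being connected and $P_4$-free, its complement $\overline{G}$ is disconnected, so $G$ is a join of two nonempty vertex sets $V_1$ and $V_2$, meaning every vertex of $V_1$ is adjacent to every vertex of $V_2$. I would fix such a partition (for instance, take $V_1$ to be one connected component of $\overline{G}$ and $V_2$ the rest) and then analyse the given colouring through this complete bipartite structure.

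Write $r_i$ and $b_i$ for the numbers of red and blue vertices in $V_i$, for $i \in \{1,2\}$. The central observation is that because $V_1$ is complete to $V_2$, any red vertex of $V_1$ sees all $b_2$ blue vertices of $V_2$, so the $d$-colouring constraint forces $b_2 \le d$ whenever $r_1 \ge 1$; symmetrically $b_1 \le d$ if $r_2 \ge 1$, $r_2 \le d$ if $b_1 \ge 1$, and $r_1 \le d$ if $b_2 \ge 1$.

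From here I would split into cases according to which colours appear in each part. If both $V_1$ and $V_2$ contain vertices of both colours, then all four inequalities above apply, giving $r_1,r_2,b_1,b_2 \le d$ and hence both colour classes have size at most $2d$. If, on the other hand, some part is monochromatic — say $V_1$ is entirely red — then every blue vertex lies in $V_2$, and any red vertex of $V_1$ (which exists, as $V_1$ is nonempty) is adjacent to all of them; thus the blue class has size $b_2 \le d$. The remaining monochromatic subcases are symmetric. In every case $\min(|B|,|R|) \le 2d$, as required.

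The only genuinely delicate point is setting up the join decomposition correctly and confirming that the case distinction is exhaustive; once the complete-bipartiteness between $V_1$ and $V_2$ is in hand, the degree bounds of the $d$-colouring do all the work, and no induction on the cotree is actually needed.
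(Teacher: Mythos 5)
Your proof is correct and follows essentially the same route as the paper's: both exploit that a connected $P_4$-free graph is a join of two non-empty parts and then use the degree bound $d$ across the complete bipartite structure to bound one colour class by $2d$. The only difference is cosmetic — the paper cases on whether one part contains more than $d$ red vertices, while you case on which parts are monochromatic — but the underlying counting argument is identical.
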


\begin{proof}
    Let $G$ be a connected $P_4$-free graph with a red-blue $d$-colouring. Then $G$ has at least two vertices. Let $R\subseteq V(G)$ and $B\subseteq V(G)$ denote the (non-empty) sets of red and blue vertices, respectively. A graph is $P_4$-free if and only if it is a cograph. It follows directly from the definition of a cograph that $G$ has a spanning complete bipartite subgraph. Therefore, the vertices of $G$ can be partitioned into a pair of non-empty disjoint sets $S_1,S_2 \neq \emptyset$, which are complete to one another.

    If $S_1$ contains at least $d+1$ red vertices, then $S_2$ must be monochromatic red. Moreover, in that case, $S_1$ contains at most $d$ blue vertices, as each vertex in $S_2$ is adjacent to at most $d$ blue vertices. Hence, $G$ contains at most $d$ blue vertices. 
    Thus, we can assume that $S_1$ and, symmetrically, $S_2$ contain at most $d$ red vertices each. This means that $G$ contains at most $2d$ red vertices.\qed
\end{proof}

\noindent
We are now ready to prove the second main result of this section.

\begin{theorem}\label{t-dc}
For every $d\geq 2$, \dcut\ is polynomial-time solvable on partitioned probe $(P_1+P_4)$-free graphs.
\end{theorem}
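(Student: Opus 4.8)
The plan is to use Observation~\ref{o-cut-colouring} to reduce to deciding whether the connected graph $G$ has a red-blue $d$-colouring, and then to solve this by guessing the colours on a bounded ``core'', colour-processing with Lemma~\ref{l-process}, and arguing that the vertices left uncoloured form an independent set that can be coloured separately. Throughout I would fix an edge set $F\subseteq\binom{N}{2}$ witnessing that $G+F$ is $(P_1+P_4)$-free; since $G$ is connected, so is $G+F$. The structural backbone is that a connected $(P_1+P_4)$-free graph either is $P_4$-free (a cograph) or contains an induced $P_4$ that \emph{dominates} it: if $Q$ induces a $P_4$ in $G+F$ and some vertex were anti-complete to $Q$ in $G+F$, then it would form an induced $P_1+P_4$ together with $Q$.

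First I would treat the dominating case. Suppose $G+F$ contains an induced $P_4$ on a vertex set $Q$, which then dominates $G+F$. Exactly as in Lemma~\ref{lem:mc-induction}, any vertex anti-complete to $Q$ in $G$ must lie in $N$: a probe vertex keeps its non-adjacency to $Q$ in $G+F$ because $F$ only adds edges inside $N$, and so an anti-complete probe vertex would create an induced $P_1+P_4$ in $G+F$. I would therefore branch over all red-blue colourings of the closed neighbourhood $N_G[Q]$; since $|Q|=4$ and every vertex has at most $d$ neighbours of the opposite colour, this gives only $\mathcal{O}(2^4 n^{4d})$ branches, polynomial for fixed $d$. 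Each branch is a precoloured pair $(X,Y)$, which I colour-process into $(X',Y')$ via Lemma~\ref{l-process} (discarding the branch if it returns {\tt no}). Afterwards, every still-uncoloured vertex lies outside $N_G[Q]$, hence in $N$; as $N$ is independent and each neighbour of an $N$-vertex lies in $P$ and is therefore already coloured, the set $U$ of uncoloured vertices is independent and each of its vertices has all its neighbours coloured.

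For the cograph case ($G+F$ being $P_4$-free) I would proceed analogously, replacing the dominating $P_4$ by a dominating \emph{edge} $uv$ of the connected cograph $G+F$ (as in Lemma~\ref{lem:SPMC-p4}) and branching over the $\mathcal{O}(n^{2d})$ colourings of $N_G[\{u,v\}]$; the same reasoning leaves an independent uncoloured set $U\subseteq N$ all of whose neighbours are coloured. Here Lemma~\ref{lem:cographBoundedColourClass} plays its role: in the cograph-structured part it bounds the minority colour class by $2d$, so that the part we must guess stays of constant size and the branching remains polynomial, confining the genuinely free choices to the independent set $U$.

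In both cases the crux -- and the step I expect to be the main obstacle -- is to decide in polynomial time whether the partial colouring extends to $U$. After colour-processing each $u\in U$ already has at most $d$ red and at most $d$ blue neighbours, so colouring $u$ either way never violates $u$'s own constraint; the only surviving constraints are the residual budgets of the coloured neighbours (each red vertex tolerating at most $d$ blue neighbours, and symmetrically). Since $U$ is independent with all neighbours coloured, deciding which vertices of $U$ to colour blue amounts to choosing a subset $S\subseteq U$ so that, for every coloured vertex $w$, the number of chosen neighbours of $w$ stays within a prescribed one-sided bound. For $d=1$ this is precisely the matching-based extension already used for \mc; for general $d$ the difficulty is that selecting a vertex charges \emph{all} its opposite-coloured neighbours at once, so the naive formulation is not a plain flow. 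The technical heart is thus to show that, exploiting the bound $d$ on opposite-coloured neighbours and the bipartite structure between $U$ and its coloured neighbourhood, this selection problem can be solved in polynomial time (via a flow/degree-constrained formulation). Taking the disjunction over all branches and both cases then decides whether $G$ admits a red-blue $d$-colouring, and hence a $d$-cut.
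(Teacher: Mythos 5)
Your framework (branch on the colouring of a constant-size dominating set, colour-process via Lemma~\ref{l-process}, observe that the leftover uncoloured vertices form an independent subset of $N$ whose neighbours are all coloured) matches the opening moves of the paper's proof, but the step you yourself flag as "the technical heart" is a genuine, and in fact fatal, gap. The residual problem you arrive at -- given an independent set $U$ each of whose vertices has all neighbours coloured, choose a colour for each $u\in U$ so that every already-coloured vertex $w$ keeps at most $d$ neighbours of the opposite colour -- is \NP-hard in general for $d\geq 2$. The paper's own hardness construction in Theorem~\ref{t-4p1} is precisely such an instance: the cliques $K$ and $K'$ are forced monochromatic, and the independent set $I$ must be $2$-coloured subject to per-vertex budgets at the clause vertices, which encodes monotone {\sc 3-Satisfiability}. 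So there is no hope of solving your selection problem by a generic flow or degree-constrained formulation; for $d=1$ the matching-based extension (Lemma~21 of~\cite{LPR24}) works, but it does not generalize. Any correct proof must use the $(P_1+P_4)$-structure to \emph{prevent} a nontrivial residual instance from arising, which is exactly what the paper does: it keys the case distinction on whether $G[P]$ (not $G+F$) contains an induced $P_4$, and when $G[P]$ is a cograph it performs a case analysis on the number of components of $G[P]$ and on a classification of the non-probes (complete to $P$, complete to all but one component, complete/anti-complete to each component, or attached to a single component), using Lemma~\ref{lem:cographBoundedColourClass} and $P$-dominating pairs to show that after polynomially many further guesses every vertex is either coloured or has a monochromatic coloured neighbourhood (and so can be coloured safely). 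In every branch the paper ends with \emph{no} free choices left on $N$; it never has to solve your extension problem.

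A secondary issue: your algorithm branches on a dominating $P_4$ (or dominating edge) of $G+F$, but $F$ is only existentially quantified and is not part of the input, so these objects are not directly computable; a dominating edge of the cograph $G+F$ may even be an $F$-edge with both endpoints in $N$. This is patchable by enumerating all candidate pairs/quadruples and verifying each final colouring, but it is another reason the paper anchors its algorithm on structures visible in $G[P]$ rather than in $G+F$.
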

\begin{proof}
Let $d\geq 2$.
Let $G=(V,E)$. Let $(G,P,N)$ be a connected partitioned probe $(P_1+P_4)$-free graph, so there exists an edge subset $F\subseteq \binom{N}{2}$ such that $G+F$ is $(P_1+P_4)$-free. 
We show how to find a $d$-cut of $G$ or that none exists.

Suppose that there is a set $Q \subseteq P$ that induces a $P_4$ in $G$.
Then $Q$ dominates $G$, otherwise $G$ contains an induced $P_1+P_4$ of which at most one vertex belongs to $N$ and so $G+F$ also contains an induced $P_1+P_4$.  For every red-blue $d$-colouring of $G$, every vertex $v \in V$ has at most $d$ neighbours in a different colour class.  That is, there are $\mathcal{O}(n^d)$ red-blue $d$-colourings of $N_G[v]$.  As $|Q|=4$, we can consider all $\mathcal{O}(n^{d})$ colourings of $N_G[Q]$, and, as $Q$ dominates $G$, this is all red-blue $d$-colourings of $G$; hence, we can solve the problem in polynomial time.

We may now assume that $G[P]$ is a $P_4$-free graph (cograph) and we consider three cases according to the number of connected components of $G[P]$. As $G$ is connected and $N$ is an independent set of $G$, we find that $P$ dominates $N$, that is, every vertex of $N$ has a neighbour in $P$. This means that $G$ has a red-blue $d$-colouring that colours every vertex of $P$ blue only if $G$ has a red-blue $d$-colouring that colours exactly one vertex of $N$ red and all other vertices of $G$ blue. We can check this in polynomial time. Hence, from now on, we will assume that in every red-blue $d$-colouring of $G$ (if such a colouring exists), there is at least one red vertex and at least one blue vertex in $P$.

     \medskip
    \noindent
    \textbf{Case \thetheorem.1:} $G[P]$ has exactly one connected component.\\ 
\noindent
By Lemma~\ref{lem:cographBoundedColourClass}, we may assume that a set $X$ of at most $2d$ vertices of $P$ is coloured red. We guess $X$.  
As we may assume that $P$ is not monochromatic,
we may assume that $X$ is a proper non-empty subset of $P$.
 We colour every vertex of $P\setminus X$ blue. Then we consider all $\mathcal{O}(n^{2d})$ possible red-blue colourings of the neighbourhood of $X$ in $N$. For each of them, we consider the remaining uncoloured vertices in $N$. As these only have blue neighbours, we can safely colour them blue.  
 It remains to check in polynomial time if the obtained colouring of $G$ is indeed a red-blue $d$-colouring.

     \medskip
    \noindent
    \textbf{Case \thetheorem.2:} $G[P]$ has exactly two connected components.\\ 
Let the two connected components of $G[P]$ have vertex sets $C_1$ and $C_2$.
By Lemma~\ref{lem:cographBoundedColourClass}, we may assume that a set $X_1$ of at most $2d$ vertices of $C_1$ is coloured red and the other vertices of $C_1$ are coloured blue, and that a set $X_2$ of at most $2d$
vertices of $C_2$ are either all coloured red or all coloured blue, while all the other vertices of $C_2$ have the opposite colour. If the vertices of $X_2$ are all coloured red, then we can proceed exactly the same way as in Case~\ref{t-dc}.1. 

It remains to consider those branches where we guess $X_1$ and colour all its vertices red, and we guess $X_2$ and colour all its vertices blue.
In this case, we consider all $\mathcal{O}(n^{4d^2})$ possible red-blue colourings of the neighbourhood of $X_1\cup X_2$ in $N$. For each of them, we colour-process and then consider the uncoloured vertices in $N$. If they only have blue neighbours, we can safely colour them blue.
If they only have red neighbours, we can safely colour them red. 

Suppose that 
there still exist uncoloured vertices in $N$. 
We first consider the case when we have an uncoloured vertex $b\in N$ that in $C_1$ has a neighbour $x$ as well as a non-neighbour $x'$ and that in $C_2$ has a neighbour $y$ as well as a non-neighbour $y'$. 
Since $C_1$ induces a connected graph, we may assume that~$x$ is adjacent to~$x'$ and similarly, we may assume that~$y$ is adjacent to~$y'$.
Hence $\{x',x,b,y,y'\}$ induces a $P_5$ in $G$; see also Figure~\ref{fig-claim-2}. Of the vertices $x',x,b,y,y'$, only $b$ belongs to $N$. Hence, this $P_5$  is also an induced $P_5$ in $G+F$. We now colour all uncoloured vertices in the neighbourhood of $\{x',x,y,y'\}$ in $N$. This yields $\mathcal{O}(n^{4d})$ further branches. 

Suppose that after colour-processing again, we still have an uncoloured vertex~$b'$ in~$N$. 
Note that $b'$ is adjacent to none of $x',x,y,y'$, otherwise it would have been coloured.
If $b'b\notin E(F)$, then $\{b'\}\cup\{x',x,b,y\}$ induces a $P_1+P_4$ in $G+F$. If $b'b\in E(F)$, then $\{y'\}\cup \{x',x,b,b'\}$ induces a $P_1+P_4$ in $G+F$. So, in both cases, we obtain a contradiction. Hence, there are no uncoloured vertices left. We check in polynomial time whether the obtained red-blue colouring of $G$ is a red-blue $d$-colouring. If so, we are done. Otherwise, we discard this branch and consider the next branch.

Now we consider the remaining case, in which every uncoloured vertex in $N$ is either complete or anti-complete to  
either $C_1$ or $C_2$.

First suppose that $N$ contains an uncoloured vertex $b$ that is anti-complete to, say, $C_1$.  
As $b$ is uncoloured, 
$b$ is adjacent to a blue vertex $x$ (as well as to a red vertex). As $b$ is anti-complete to $C_1$, $x$ must belong to $C_2$. Hence, $x$ belongs to~$X_2$. This is a contradiction, as we already coloured all neighbours of $X_2$ in $N$. 

Now suppose that every uncoloured vertex in $N$ is complete to 
either $C_1$ or $C_2$.
Let $b$ be an uncoloured vertex in $N$ that is complete to, say, $C_1$; note that this means that $X_1=\emptyset$.
We recall that every vertex of $C_1 \setminus X_1=C_1$ has the same colour, and also that we colour-processed. Hence, $C_1$ has at most $d$ vertices, as otherwise we would have given $b$ the colour of $C_1$.

We now also colour the neighbourhood of $C_1$ in $N$ in every possible way. This yields $\mathcal{O}(n^{d^2})$ further branches.
Afterwards we colour-process. Suppose we still have an uncoloured vertex $b'$ in $N$. Then $b'$ only has neighbours in $C_2$. As  $b'$
is not adjacent to any vertex in $X_2$, we find that $b'$
only has only red neighbours, and we would have safely coloured $b'$ 
red in a previous step. Hence, there are no uncoloured vertex left in 
$N$.
It remains to check in polynomial time if the obtained red-blue colouring of $G$ is a red-blue $d$-colouring. If so, we are done. Otherwise, we discard this branch and consider the next branch.

As the number of branches is polynomial, and we can process each branch in polynomial time, our algorithm takes polynomial time if this case occurs (the correctness of our algorithm follows from the case description).

\begin{figure}[t]
\centering
\scalebox{.8}{
\begin{tikzpicture}
\coordinate (A) at (-5,0);
\coordinate (B) at (-3,1);
\coordinate (C) at (-2,0);
\coordinate (D) at (0.5,0);
\coordinate (E) at (2.3,1);
\coordinate (F) at (4.3,0);
\coordinate (G) at (-4,-2);
\coordinate (H) at (-2,-1);
\coordinate (I) at (2,-2);
\draw[fill=blue!20!white] (A) rectangle (B)(E) rectangle (F);
\draw[fill=red!20!white] (B) rectangle (C)(D) rectangle (E);
\draw[fill=green!20!white] (G) rectangle (H);
\draw[fill=gray!20!white](H) rectangle (I);
\draw[fill=black]
($(A)!1/2!(B)+(0,-0.2)$) circle [radius=2pt]
($(B)!1/2!(C)+(-0.3,-0.2)$) circle [radius=2pt]
($(D)!1/2!(E)+(0,-0.2)$) circle [radius=2pt]
($(E)!1/2!(F)+(0,-0.2)$) circle [radius=2pt]
($(H)!1/2!(I)+(0,0.2)$) circle [radius=2pt]
($(G)!1/2!(H)+(0,0.2)$) circle [radius=2pt];
\draw ($(B)!1/2!(C)+(-0.3,-0.2)$)--($(A)!1/2!(B)+(0,-0.2)$)--($(G)!1/2!(H)+(0,0.2)$)
($(G)!1/2!(H)+(0,0.2)$)--($(D)!1/2!(E)+(0,-0.2)$)--($(E)!1/2!(F)+(0,-0.2)$);
\draw[dashed] ($(E)!1/2!(F)+(0,-0.2)$)--($(G)!1/2!(H)+(0,0.2)$)--($(B)!1/2!(C)+(-0.3,-0.2)$)
($(D)!1/2!(E)+(0,-0.2)$)--($(H)!1/2!(I)+(0,0.2)$)--($(E)!1/2!(F)+(-0.3,-0.2)$)
($(A)!1/2!(B)+(0,-0.2)$)--($(H)!1/2!(I)+(0,0.2)$)--($(B)!1/2!(C)+(-0.3,-0.2)$);
\node[below] at ($(G)!1/2!(H)+(0,0.2)$) {$b$};
\node[below] at ($(H)!1/2!(I)+(0,0.2)$) {$b'$};
\node[above] at ($(A)!1/2!(B)+(0,-0.2)$) {$x$};
\node[above] at ($(B)!1/2!(C)+(0,-0.2)$) {$x'$};
\node[above] at ($(D)!1/2!(E)+(0,-0.2)$) {$y$};
\node[above] at ($(E)!1/2!(F)+(0,-0.2)$) {$y'$};
\node[above] at ($(A)!1/2!(B)+(-1.7,0)$) {$C_1\setminus X_1$};
\node[above right] at ($(B)!1/2!(C)+(0.6,0)$) {$X_1$};
\node[above left] at ($(D)!1/2!(E)+(-0.9,0)$) {$C_2\setminus X_2$};
\node[above right] at ($(E)!1/2!(F)+(1.1,0)$) {$X_2$};
\node[below right] at ($(B)!1/2!(C)+(0.5,0)$) {$|X_1|\leq 2d$};
\node[below right] at ($(E)!1/2!(F)+(1,0)$) {$|X_2|\leq 4d$};
\node[above left] at ($(G)!1/2!(H)+(-1,0)$) {coloured};
\node[below left] at ($(G)!1/2!(H)+(-1.4,0)$) {$N$};
\node[above right] at ($(H)!1/2!(I)+(2,0)$) {uncoloured};
\node[below right] at ($(H)!1/2!(I)+(2.4,0)$) {$N$};
\end{tikzpicture}}
\caption{Illustration of Case 11.2. Lines represent edges and dashed lines represent non-edges. Note that $x'$ and $y'$ may not belong to $X_1$ and $X_2$, respectively, but this is not relevant (it only matters that they are not adjacent to $b$).}
\label{fig-claim-2}
\vspace*{-0.5cm}
\end{figure}
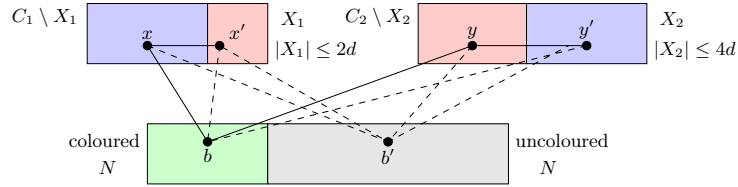

      \medskip
    \noindent
    \textbf{Case \thetheorem.3:} $G[P]$ has at least three connected components.\\
Let the connected components of $G[P]$ have vertex sets $C_1,\ldots,C_r$ for some $r\geq 3$.
We partition $N$ into four types of vertices. Let $v\in N$.
If $v$ is complete to $P$, then we say that $v$ is of {\it type-A}.

Now, suppose that $v$ is not of type-A, but that $v$ has a neighbour in every component of $G[P]$. We say that $v$ is of {\it type-B} and make the following observation.  We know that $v$ is not complete to some set $C_i$, so we can find adjacent vertices $x_i$ and $x_i'$ in $C_i$ such that $x_i$ but not $x_i'$ is adjacent to $v$.  Suppose that $v$ is not complete to another set $C_j$, so there is a vertex $x_j'$ in $C_j$ not adjacent to $v$.   Let $x_k$ be a vertex adjacent to $v$ in the vertex set of a third component. We now have that $\{x_j'\}\cup \{x_i',x_i,v,x_k\}$ forms an induced $P_1+P_4$ in $G$, and consequently in $G+F$, as it only contains one vertex of $N$, namely $v$.  Hence, a type-B vertex is complete to all but one set $C_i$, in which it has least one neighbour.

Now, suppose that $v$ is anti-complete to some $C_h$, so $v$ is neither of type-A nor of type-B. We say that $v$ is of {\it type-C} if $v$ has a neighbour in at least two other components $C_i$ and $C_j$. Suppose that $v$ has a  non-neighbour in either $C_i$ or $C_j$, say in $C_i$.
Let $x_h\in C_h$, $x_i,x_i'\in C_i$ such that $x_ix_i'$ is an edge with only $x_i$ adjacent to $v$, and let $x_j\in C_j$ be adjacent to $v$. Now, $\{x_h\}\cup \{x_i',x_i,v,x_j\}$ induces a $P_1+P_4$ in $G$, and also in $G+F$ as it has only one vertex from $N$, a  contradiction as $G+F$ is $(P_1+P_4)$-free.
Hence, $v$ is complete to every $C_j$ in which it has a neighbour.
So, a type-C vertex of $N$ is complete to at least two and at most $r-1$ sets in $\{C_1,\ldots,C_r\}$ and anti-complete to all other sets in $\{C_1,\ldots,C_r\}$.

Finally, if $v$ is neither type-A nor type-B nor type-C, then $v$ is of type-D. As $G$ is connected, a type-D vertex $v$ has at least one neighbour in one set $C_i$ and is anti-complete to every other set in $\{C_1,\ldots,C_r\}$. See Figure~\ref{fig-types} for an illustration.

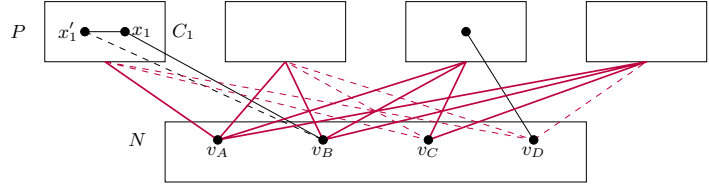
\begin{figure}[t]
\centering
\scalebox{.8}{
\begin{tikzpicture}
\coordinate (A) at (-6,0);
\coordinate (B) at (-4,1);
\coordinate (C) at (-3,0);
\coordinate (D) at (-1,1);
\coordinate (E) at (0,0);
\coordinate (F) at (2,1);
\coordinate (G) at (3,0);
\coordinate (H) at (5,1);
\coordinate (I) at (-4,-2);
\coordinate (L) at (3,-1);
\coordinate (M) at ($(I)+(0,0.7)$);
\coordinate (N) at ($(L)+(0,-0.3)$);
\coordinate (S) at ($(A)+(0,0.5)$);
\coordinate (T) at ($(B)+(0,-0.5)$);
\draw[fill=white]
(A)rectangle(B)(C)rectangle(D)(E)rectangle(F)(G)rectangle(H)(I)rectangle(L);
\draw
($(M)!3/8!(N)$)--($(S)!2/3!(T)$)--($(S)!1/3!(T)$)
($(E)!1/2!(F)$)--($(M)!7/8!(N)$);
\draw[dashed] ($(S)!1/3!(T)$)--($(M)!3/8!(N)$);
\draw[thick,color=purple]
($(A)!1/2!(B)+(0,-0.5)$)--($(M)!1/8!(N)$)--($(C)!1/2!(D)+(0,-0.5)$)
($(E)!1/2!(F)+(0,-0.5)$)--($(M)!1/8!(N)$)--($(G)!1/2!(H)+(0,-0.5)$)
($(G)!1/2!(H)+(0,-0.5)$)--($(M)!3/8!(N)$)--($(C)!1/2!(D)+(0,-0.5)$)
($(E)!1/2!(F)+(0,-0.5)$)--($(M)!3/8!(N)$)
($(E)!1/2!(F)+(0,-0.5)$)--($(M)!5/8!(N)$)--($(G)!1/2!(H)+(0,-0.5)$);
\draw[dashed,color=purple]
($(A)!1/2!(B)+(0,-0.5)$)--($(M)!5/8!(N)$)--($(C)!1/2!(D)+(0,-0.5)$)
($(A)!1/2!(B)+(0,-0.5)$)--($(M)!7/8!(N)$)--($(C)!1/2!(D)+(0,-0.5)$)
($(M)!7/8!(N)$)--($(G)!1/2!(H)+(0,-0.5)$);
\draw[fill=black]
\foreach \i in {1,3,5,7} {($(M)!\i/8!(N)$) circle [radius=2pt]}
($(E)!1/2!(F)$) circle [radius=2pt]
\foreach \i in {1,2} {($(S)!\i/3!(T)$) circle [radius=2pt]};
%\draw [decorate,decoration={brace,amplitude=5pt,mirror,raise=0ex}] ($(G)+(0.3,1)$)--($(F)+(-0.3,0)$) node[below,midway]{${\cal C}_{v_C}$};
\foreach \i/\j in {1/A, 3/B, 5/C, 7/D}{
\node[below] at ($(M)!\i/8!(N)$) {$v_\j$};}
\node[left] at ($(S)+(-0.2,0)$) {$P$};
\node[left] at ($(M)+(-0.2,0)$) {$N$};
\node[right] at (T) {$C_1$};
\node[left] at ($(S)!1/3!(T)$) {$x_1'$};
\node[right] at ($(S)!2/3!(T)$) {$x_1$};
\end{tikzpicture}}
\caption{Illustration of the types described in Case \thetheorem.3, specified by vertex label. A (dashed) purple line between a vertex $v\in N$ and some $C_i$ represents that $v$ is (anti-)complete to $C_i$. A black (dashed) edge between a vertex $v\in N$ and a vertex in $P$ represents a (non)-edge.}
\label{fig-types}
\vspace*{-0.5cm}
\end{figure}

\noindent
We distinguish between the following three cases:

\medskip
\noindent
{\bf Case 11.3.1} $N$ has a type-A vertex.\\
Let $v\in N$ be of type-A. We colour $v$ blue and we consider all $\mathcal{O}(n^d)$ possible red-blue colourings of its neighbourhood $N(v)=P$. We then find a set $Q\subseteq P$ of at most $d$ vertices in $P$ that are coloured red. As we already checked whether $G$ has a red-blue $d$-colouring in which $P$ is monochromatic, we may assume that $Q$ is a proper non-empty subset of $P$. We now consider all possible  $\mathcal{O}(n^{d^2})$ possible red-blue colourings of the neighbourhood of $Q$ in $N$. For each one of them, the uncoloured vertices in $N$ only have blue neighbours and form an independent set, we can safely colour them blue. Note that at least one vertex is red and at least one is blue. It remains to check whether the obtained colouring of $G$ is a red-blue $d$-colouring. If so, we are done, and otherwise we discard the branch.
Hence, as the number of branches is polynomial, and we can process each branch in polynomial time, this case takes polynomial time.

\medskip
\noindent
{\bf Case 11.3.2} $N$ has no type-A vertices, but $N$ has a type-B vertex.\\
Let $v \in N$ be of type-B.
We assume without loss of generality that $v$ has a neighbour and a non-neighbour in $C_1$ and is complete to $C_2,\ldots,C_r$.
We colour~$v$ blue, and we consider all $\mathcal{O}(n^d)$ options of choosing a set $X_v$ of at most $d$ red neighbours of $v$ in $P$. We colour all other neighbours of $v$ in $P$ blue. Hence, afterwards part of $C_1$ is coloured, while all vertices of every $C_i$ with $i\in \{2,\ldots,r\}$ are coloured.
Moreover, by Lemma~\ref{lem:cographBoundedColourClass}, we find that any red-blue $d$-colouring of $G$ (if one exists) either colours at most $2d$ vertices of $C_1$ red, or else it colours at most $2d$ vertices of $C_1$ blue. Hence, we also consider all $\mathcal{O}(n^{2d})$ options for choosing a set $X\subseteq C_1$ of size at most~$2d$ that consists of these vertices. We colour the vertices of $C_1\setminus X$ with the opposite colour. We also colour the neighbourhood of $X_v \cup X$ in $N$ in every possible way. This yields another $\mathcal{O}(n^{3d^2})$ branches. For each branch, we colour-process. If afterwards we find an uncoloured vertex in $N$ whose neighbourhood in $P$ is monochromatic, then we give it the unique colour of the vertices in its neighbourhood in $P$.

Suppose that afterwards there still exist uncoloured vertices in $N$. Let $b$ be an arbitrary uncoloured vertex in $N$
and note that
$b$ is adjacent to both a blue vertex $x$ and a red vertex $y$ in $P$. As we coloured the neighbourhood of $X_v\cup X$, neither $x$ nor $y$ belongs to $X_v\cup X$. 
As $X_v$ contains all red vertices in $N(v)$, it follows that $y$ belongs to $C_1 \setminus N(v)$, and thus $y\in C_1\setminus X$. Consequently, all vertices of $X$ are blue. As $x$, which is blue, is not in $X$, and all vertices in $C_1\setminus X$ are red, $x$ belongs to some set $C_i$ with $i\geq 2$. 
Hence, all uncoloured vertices have a neighbour in $C_1$ and a neighbour in at least one other $C_i$. As $N$ has no type-A vertices, every uncoloured vertex in $N$ is of type-B or of type-C. 

First, suppose that $N$ has an uncoloured vertex $b$ that is of type-B. By definition, $b$ is complete to $r-1$ sets in $\{C_1,\ldots,C_r\}$. Let $Y$ be the union of these $r-1$ sets. We recall that $b$ is uncoloured and that all vertices in every $C_i$ are coloured. Hence, $Y$ must contain at most $2d$ vertices in total, otherwise we would have given $b$ a colour during colour-processing. We now consider all $\mathcal{O}(n^{2d^2})$ possible red-blue colourings of the neighbourhood of $Y$ in $N$. Afterwards, there are no uncoloured vertices left, as every uncoloured vertex was of type-B or of type-C and thus must have a neighbour in $Y$.
It remains to check whether the obtained colouring of $G$ is a red-blue $d$-colouring. If so, we are done, and otherwise we discard the branch.
 
So we can now assume
that all uncoloured vertices are of type-C. Let $b$ again denote an uncoloured vertex in $N$. From the definition of type-C, it follows that $b$ is either complete or anti-complete to every set $C_i$. Recall that all uncoloured vertices of $N$, and thus $b$, have a neighbour in $C_1$.
Hence, $b$ is complete to $C_1$. Note that $b$ is anti-complete to $X$, as otherwise $b$ would have been coloured. This means that $X=\emptyset$, and thus every vertex of $C_1$ is coloured red. As $b$ is uncoloured and we have colour-processed, this means that $C_1$ has size at most $d$. Hence, we can consider  all $\mathcal{O}(n^{d^2})$ possible red-blue colourings of the neighbourhood of $C_1$ in $N$. 
Afterwards, all uncoloured vertices in $N$ have received a colour (as they were all complete to $C_1$).  It remains to check whether the obtained colouring of $G$ is a red-blue $d$-colouring. If so, we are done, otherwise we discard the branch.

As the number of branches is polynomial, and we can process each branch in polynomial time, our algorithm takes polynomial time if this case occurs.

\medskip
\noindent
{\bf Case 11.3.3} $N$ has no type-A and no type-B vertices.\\
As $G$ is connected and $r\geq 3$, $G$ must have vertices that are adjacent to at least two sets $C_i$.
As $G$ has no vertices of type-A or type-B, all of these vertices must be of type-C.
Let $u$ and $v$ be two type-C vertices, such that the following holds:
\begin{enumerate}[(i)]
\item $u$ is complete to some $C_h$, but anti-complete to $C_i$;
\item $v$ is anti-complete to $C_h$, but complete to $C_i$; and
\item $u$ and $v$ are both complete to some $C_j$.
\end{enumerate}
In this case, we claim that $\{u,v\}$ is a {\it $P$-dominating pair}, that is, every $C_i$ is complete to a least one of $u$, $v$; see Figure~\ref{fig-p-dominating-pair}. 
For a contradiction, suppose that neither $u$ nor $v$ is complete to some $C_k$ (so both are anti-complete to $C_k$ as they are of type-C). Let $z\in C_k$. If $uv\notin E(F)$, then a vertex of $C_h$, $u$, a vertex of $C_j$ and $v$ form, together with $z$, an induced $P_1+P_4$ in $G+F$. If $uv\in E(F)$, then a vertex of $C_h$, $u$, $v$ and a vertex of $C_i$, together with $z$, form an induced $P_1+P_4$ in $G+F$. So, in both cases, we derive a contradiction. 

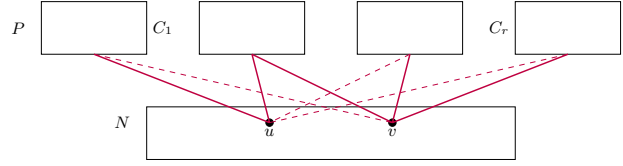
\begin{figure}[t]
\centering
\scalebox{.7}{
\begin{tikzpicture}
\coordinate (A) at (-6,0);
\coordinate (B) at (-4,1);
\coordinate (C) at (-3,0);
\coordinate (D) at (-1,1);
\coordinate (E) at (0,0);
\coordinate (F) at (2,1);
\coordinate (G) at (3,0);
\coordinate (H) at (5,1);
\coordinate (I) at (-4,-2);
\coordinate (L) at (3,-1);
\coordinate (M) at ($(I)+(0,0.7)$);
\coordinate (N) at ($(L)+(0,-0.3)$);
\coordinate (S) at ($(A)+(0,0.5)$);
\coordinate (T) at ($(B)+(0,-0.5)$);
\draw[fill=white]
(A)rectangle(B)(C)rectangle(D)(E)rectangle(F)(G)rectangle(H)(I)rectangle(L);
\draw[fill=black]
\foreach \i in {2,4} {($(M)!\i/6!(N)$) circle [radius=2pt]};
\draw[thick,color=purple]
($(A)+(1,0)$)--($(M)!2/6!(N)$)
($(C)+(1,0)$)--($(M)!2/6!(N)$)
($(C)+(1,0)$)--($(M)!4/6!(N)$)
($(E)+(1,0)$)--($(M)!4/6!(N)$)
($(G)+(1,0)$)--($(M)!4/6!(N)$);
\draw[dashed,color=purple]
($(E)+(1,0)$)--($(M)!2/6!(N)$)
($(G)+(1,0)$)--($(M)!2/6!(N)$)
($(A)+(1,0)$)--($(M)!4/6!(N)$);
\foreach \i/\j in {2/u,4/v}{\node[below] at ($(M)!\i/6!(N)$) {$\j$};}
\node[left] at ($(S)+(-0.2,0)$) {$P$};
\node[left] at ($(M)+(-0.2,0)$) {$N$};
\node[right] at ($(T)$) {$C_1$};
\node[left] at ($(G)+(0,0.5)$) {$C_r$};
\end{tikzpicture}}
\caption{A $P$-dominating pair $\{u,v\}$.}
\label{fig-p-dominating-pair}
\vspace*{-0.5cm}
\end{figure}

We now continue with the description of our algorithm. We choose a type-C vertex $v\in N$ such that $v$ is complete to a maximum number of sets $C_i$ over all type-C vertices of $N$. We say that $v$ is of {\it maximum} type-C. 
From the definition of type-C, it follows that $v$ is anti-complete to at least one set in $\{C_1,\ldots,C_r\}$. Let $C_h$ be such a set. As $G$ is connected, $G$ contains a path from $v$ to the vertices in $C_h$.
Hence, without loss of generality, there exists a type-C vertex $u\in N$ that is complete to $C_h$ and to at least one other $C_j$ to which $v$ is also complete. 
As $v$ is of maximum type-C and $v$ is not complete to $C_h$, to which $u$ is complete, 
we find that $u$ is also anti-complete to some set $C_i$ to which $v$ is complete. Hence, $\{u,v\}$ satisfies (i)--(iii), so $\{u,v\}$ is a $P$-dominating pair.

As $\{u,v\}$ is a $P$-dominating pair, we can colour $P$ as follows. First suppose $u$ and $v$ are coloured alike, say both are coloured blue. We proceed in exactly the same way as in Case~\thetheorem.1. As $u$ and $v$ can each have at most $d$ red neighbours, 
we may assume that a set $X$ of at most $2d$ vertices of $P$ is coloured red. We guess $X$. As we already checked whether $G$ has a red-blue $d$-colouring in which $P$ is monochromatic, we may assume that $X$ is a proper non-empty subset of $P$. We colour every vertex of $P\setminus X$ red. Afterwards, we consider all $\mathcal{O}(n^{2d})$ possible red-blue colourings of the neighbourhood of $X$ in $N$. For each one of them, we consider the uncoloured vertices in $N$. As these only have blue neighbours, we can safely colour them blue. Note that at least one vertex is red and at least one vertex is blue. It remains to check in polynomial time whether the obtained colouring of $G$ is indeed a red-blue $d$-colouring.

Now suppose $u$ is coloured red and $v$ is coloured blue. We guess a set $X_u$ of at most $d$ blue neighbours of $u$ in $P$. We colour all other neighbours of $u$ in $P$ red. Similarly, we guess a set $X_v$ of at most $d$ blue neighbours of $v$ in $P$. We colour all other neighbours of $v$ in $P$ red. 
This gives us $\mathcal{O}(n^{2d})$ branches. We note that every vertex of $P$ has been coloured, as $\{u,v\}$ is a $P$-dominating pair. We now colour the neighbourhood of $X_u\cup X_v$ in $N$ in every possible way. This gives us a further $\mathcal{O}(n^{2d^2})$ branches.

We let $C^*_1,\ldots, C^*_q$ be those sets $C_i$ that contain a vertex of $X_u\cup X_v$. 
We let $C^u_1,\ldots,C^u_s$ be those sets $C_i$ that contain no vertex of $X_u\cup X_v$ and are coloured red. 
We let $C^v_1,\ldots,C^v_t$ be those sets $C_i$ that contain no vertex of $X_u\cup X_v$ and are coloured blue. Note that every $C_i$ belongs to one of these three families of sets, but some of these families might be empty.

As $v$ is coloured blue, its neighbours not in $X_v$ are blue. Hence, $v$ is anti-complete to every $C^u_i$. Similarly, $u$ is anti-complete to every $C^v_j$. 
Recall that $\{u,v\}$ is a $P$-dominating pair and that each of $u,v$, being of type-C, is either complete or anti-complete to a set $C_i$.
Consequently, $u$ is complete to every $C^u_i$ and $v$ is complete to every $C^v_i$.

If $C^u_1$ exists, then we select an arbitrary vertex $x_1\in C^u_1$, and we colour the neighbourhood of $x_1$ in $N$ in every possible way. This leads to $\mathcal{O}(n^d)$ additional branches.
We do the same if $C^v_1$ exists, leading to another $\mathcal{O}(n^d)$ branches.
We now colour-process. Afterwards, we colour any vertex in $N$ with monochromatic neighbourhood in $P$ with the unique colour of its neighbours in $P$.

We claim that every vertex of type-D has now been coloured. For a contradiction, suppose $z\in N$ is of type-D and has no colour yet. We recall that all $C_i^u$ and $C_j^v$ are monochromatic and that $z$, being type-D, only has neighbours in exactly one $C_i$.
This means that $z$ must have both a blue neighbour and a red neighbour in some $C_i^*$. However, one of these two neighbours of $z$ belongs to $X_u\cup X_v$, meaning $z$ would have been coloured.

Hence, the only vertices of $G$ that are possibly still uncoloured are type-C vertices in $N$.
Let $b\in N$ be an uncoloured vertex of type-C.  
This means that $b$ has both a red neighbour and a blue neighbour in $P$. 
By definition, $b$ is either complete or anti-complete to $C_i$ for every $i\in \{1,\ldots,r\}$.
This means that $b$ is anti-complete to every $C_h^*$, as we coloured the neighbourhood of $C_h^*\cap (X_u\cup X_v)$, which is a non-empty set by definition. 
Hence, the red neighbour of $b$ must be some $z^u\in C^u_i$, and the blue neighbour of $b$ must be some $z^v\in C^v_j$. From the above, we find that $b$ is anti-complete to $C_1^u$ and $C_1^v$ (as we coloured the neighbourhood of one of the vertices in them).
Hence, we have $i\geq 2$ and $j\geq 2$. 
Let $y^u\in C_1^u$ and $y^v\in C_1^v$. 
If $bv\notin F$, then $\{y^u\}\cup \{z^u,b,z^v,v\}$ induces a $P_1+P_4$ in $G+F$.
If $bv\in F$, then $\{y^u\}\cup \{z^u,b,v,y^v\}$ induces a $P_1+P_4$ in $G+F$.
See Figure~\ref{fig-c-types} for an illustration.
In both cases, we obtain a contradiction.

\begin{figure}[t]
\centering
\begin{minipage}{0.5\textwidth}
\begin{tikzpicture}[scale=0.6]
\coordinate (A) at (-6,0);
\coordinate (B) at (-4.5,1);
\coordinate (C) at (-3.5,0);
\coordinate (D) at (-2,1);
\coordinate (G) at (-1,0);
\coordinate (H) at (0.5,1);
\coordinate (X) at (1.5,0);
\coordinate (Y) at (3,1);
\coordinate (I) at (-4.5,-2);
\coordinate (L) at (1.5,-1);
\coordinate (M) at ($(I)+(0,0.7)$);
\coordinate (N) at ($(L)+(0,-0.3)$);
\coordinate (S) at ($(A)+(0,0.5)$);
\coordinate (T) at ($(B)+(0,-0.5)$);
\draw[fill=red!20!white](A)rectangle(B)(C)rectangle(D);
\draw[fill=blue!20!white](G)rectangle(H)(X)rectangle(Y);
\draw[fill=white] (I)rectangle(L);
\draw
($(A)!1/2!(B)+(0,-0.2)$)--($(M)!1/6!(N)$)
($(C)!1/2!(D)+(0,-0.2)$)--($(M)!1/6!(N)$)
($(G)!1/2!(H)+(0,-0.2)$)--($(M)!3/6!(N)$);
\draw[dashed]
($(A)!1/2!(B)+(0,-0.2)$)--($(M)!3/6!(N)$)
($(C)!1/2!(D)+(0,-0.2)$)--($(M)!3/6!(N)$)
($(G)!1/2!(H)+(0,-0.2)$)--($(M)!1/6!(N)$)
($(X)!1/2!(Y)+(0,-0.2)$)--($(M)!1/6!(N)$)
($(A)!1/2!(B)+(0,-0.2)$)--($(M)!5/6!(N)$)
($(G)!1/2!(H)+(0,-0.2)$)--($(M)!5/6!(N)$);
\draw[dashed,blue!50!white,very thick]
($(M)!5/6!(N)$)--($(M)!3/6!(N)$);
\draw[very thick]
($(C)!1/2!(D)+(0,-0.2)$)--($(M)!5/6!(N)$)--($(X)!1/2!(Y)+(0,-0.2)$)--($(M)!3/6!(N)$);
\draw[fill=red] ($(A)!1/2!(B)+(0,-0.2)$) circle[radius=4pt]
($(C)!1/2!(D)+(0,-0.2)$) circle[radius=4pt];
\draw[fill=blue] ($(G)!1/2!(H)+(0,-0.2)$) circle[radius=2pt]
($(X)!1/2!(Y)+(0,-0.2)$) circle[radius=4pt];
\node[above] at ($(A)!1/2!(B)+(0,-0.2)$) {$y^u$};
\node[above] at ($(C)!1/2!(D)+(0,-0.2)$) {$z^u$};
\node[above] at ($(G)!1/2!(H)+(0,-0.2)$) {$y^v$};
\node[above] at ($(X)!1/2!(Y)+(0,-0.2)$) {$z^v$};
\foreach \i/\j/\k in {B/1/u,D/i/u,H/1/v,Y/j/v} {
\node[right] at ($(\i)+(0,-0.5)$) {$C_{\j}^{\k}$};}
\foreach \i/\j/\k in {1/red/2,3/blue/4,5/white/4} {\draw[fill=\j]($(M)!\i/6!(N)$) circle [radius=\k pt];}
\foreach \i/\j in {1/u,3/v,5/b}{\node[below] at ($(M)!\i/6!(N)$) {$\j$};}
\node[left] at ($(S)+(-0.2,0)$) {$P$};
\node[left] at ($(M)+(-0.2,0)$) {$N$};
\end{tikzpicture}
\end{minipage}%
\begin{minipage}{0.5\textwidth}
\begin{tikzpicture}[scale=0.6]
\coordinate (A) at (-6,0);
\coordinate (B) at (-4.5,1);
\coordinate (C) at (-3.5,0);
\coordinate (D) at (-2,1);
\coordinate (G) at (-1,0);
\coordinate (H) at (0.5,1);
\coordinate (X) at (1.5,0);
\coordinate (Y) at (3,1);
\coordinate (I) at (-4.5,-2);
\coordinate (L) at (1.5,-1);
\coordinate (M) at ($(I)+(0,0.7)$);
\coordinate (N) at ($(L)+(0,-0.3)$);
\coordinate (S) at ($(A)+(0,0.5)$);
\coordinate (T) at ($(B)+(0,-0.5)$);
\draw[fill=red!20!white](A)rectangle(B)(C)rectangle(D);
\draw[fill=blue!20!white](G)rectangle(H)(X)rectangle(Y);
\draw[fill=white] (I)rectangle(L);
\draw
($(A)!1/2!(B)+(0,-0.2)$)--($(M)!1/6!(N)$)
($(C)!1/2!(D)+(0,-0.2)$)--($(M)!1/6!(N)$)
($(C)!1/2!(D)+(0,-0.2)$)--($(M)!5/6!(N)$)
($(X)!1/2!(Y)+(0,-0.2)$)--($(M)!5/6!(N)$);
\draw[dashed]
($(A)!1/2!(B)+(0,-0.2)$)--($(M)!3/6!(N)$)
($(C)!1/2!(D)+(0,-0.2)$)--($(M)!3/6!(N)$)
($(G)!1/2!(H)+(0,-0.2)$)--($(M)!1/6!(N)$)
($(X)!1/2!(Y)+(0,-0.2)$)--($(M)!1/6!(N)$)
($(A)!1/2!(B)+(0,-0.2)$)--($(M)!5/6!(N)$)
($(G)!1/2!(H)+(0,-0.2)$)--($(M)!5/6!(N)$);
\draw[blue!50!white,very thick]
($(M)!5/6!(N)$)--($(M)!3/6!(N)$);
\draw[very thick]
($(C)!1/2!(D)+(0,-0.2)$)--($(M)!5/6!(N)$)($(M)!3/6!(N)$)--($(G)!1/2!(H)+(0,-0.2)$);
\draw[fill=red] ($(A)!1/2!(B)+(0,-0.2)$) circle[radius=4pt]
($(C)!1/2!(D)+(0,-0.2)$) circle[radius=4pt];
\draw[fill=blue] ($(G)!1/2!(H)+(0,-0.2)$) circle[radius=4pt]
($(X)!1/2!(Y)+(0,-0.2)$) circle[radius=2pt];
\node[above] at ($(A)!1/2!(B)+(0,-0.2)$) {$y^u$};
\node[above] at ($(C)!1/2!(D)+(0,-0.2)$) {$z^u$};
\node[above] at ($(G)!1/2!(H)+(0,-0.2)$) {$y^v$};
\node[above] at ($(X)!1/2!(Y)+(0,-0.2)$) {$z^v$};
\foreach \i/\j/\k in {1/red/2,3/blue/4,5/white/4} {\draw[fill=\j]($(M)!\i/6!(N)$) circle [radius=\k pt];}
\foreach \i/\j in {1/u,3/v,5/b}{\node[below] at ($(M)!\i/6!(N)$) {$\j$};}
\node[left] at ($(S)+(-1,0)$) {};
\end{tikzpicture}
\end{minipage}
\caption{Illustration of Case~11.3.3, where there is a $P$-dominating pair $\{u,v\}$ made of vertices of type-C and another type-C vertex $b$ that has not been coloured yet. The graph $G+F$ contains an induced $P_1+P_4$ (highlighted by large vertices and thick edges) regardless of whether there is an edge between $v$ and $b$ or not.}
\label{fig-c-types}
\end{figure}
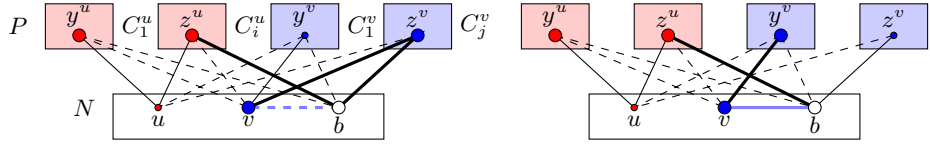

From the above, we conclude that there are no uncoloured vertices, and we have obtained a red-blue colouring of $G$. It now remains to check in polynomial time whether this is a red-blue $d$-colouring. If so, then we are done. Otherwise, we discard this branch and move on to the next branch. As the number of branches is polynomial, and we can process each branch in polynomial time, our algorithm runs in polynomial time (its correctness follows from its description). \qed
\end{proof}

\section{NP-Completeness Results}\label{s-np}

To finish the proof of Theorem~\ref{t-dicho}, we must show that
that $1$-{\sc Cut} and {\sc Perfect Matching Cut} are \NP-complete on probe $2P_2$-free graphs and probe $K_{1,3}$-free graphs and that for $d\geq 2$, \dcut\ is \NP-complete on probe $2P_2$-free graphs and probe $4P_1$-free graphs. All other \NP-completeness results follow directly from the corresponding \NP-completeness results in Theorems~\ref{t-1} and~\ref{t-2}. We recall that the \NP-completeness of $2$-{\sc Cut} for $K_{1,3}$-free graphs was recently shown~\cite{AELPS}.
As we shall see below, we often prove \NP-completeness for even more restricted classes of probe graphs.

We start with the following result, which we recall is in contrast to the polynomial-time result of $1$-{\sc Cut} for $K_{1,3}$-free graphs due to Bonsma~\cite{Bo09}. 
Its proof is based on an observation of Moshi~\cite{Mo89}.

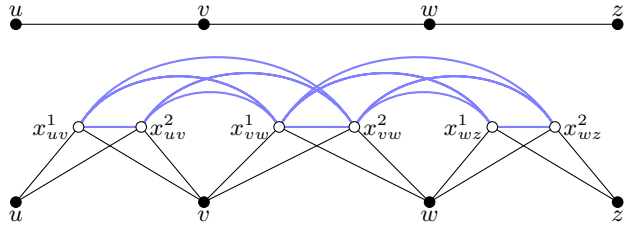
\begin{figure}[t]
\centering
\begin{minipage}{0.9\textwidth}
\centering
\begin{tikzpicture}
\coordinate (Z) at (4,0);
\coordinate (W) at (1.5,0);
\coordinate (V) at (-1.5,0);
\coordinate (U) at (-4,0);
\draw (U)--(V)--(W)--(Z);
\draw[fill=black] 
(U) circle [radius=2pt]
(V) circle [radius=2pt]
(W) circle [radius=2pt]
(Z) circle [radius=2pt];
\foreach \i/\j in {U/u, V/v, W/w, Z/z}{
\node[above] at (\i) {$\j$};}
\end{tikzpicture}
\end{minipage}
\begin{minipage}{0.9\textwidth}
\centering
\begin{tikzpicture}
\coordinate (Z) at (4,0);
\coordinate (W) at (1.5,0);
\coordinate (V) at (-1.5,0);
\coordinate (U) at (-4,0);
\draw
($(U)!1/3!(V)+(0,1)$)--(U)--($(U)!2/3!(V)+(0,1)$)
($(U)!1/3!(V)+(0,1)$)--(V)--($(U)!2/3!(V)+(0,1)$)
($(V)!1/3!(W)+(0,1)$)--(V)--($(V)!2/3!(W)+(0,1)$)
($(V)!1/3!(W)+(0,1)$)--(W)--($(V)!2/3!(W)+(0,1)$)
($(W)!1/3!(Z)+(0,1)$)--(W)--($(W)!2/3!(Z)+(0,1)$)
($(W)!1/3!(Z)+(0,1)$)--(Z)--($(W)!2/3!(Z)+(0,1)$);
\draw[color=blue!50!white,thick]
($(U)!1/3!(V)+(0,1)$)--($(U)!2/3!(V)+(0,1)$)
($(V)!1/3!(W)+(0,1)$)--($(V)!2/3!(W)+(0,1)$)
($(W)!1/3!(Z)+(0,1)$)--($(W)!2/3!(Z)+(0,1)$)
\foreach \i in {1,2} {
\foreach \j in {1,2} {
($(U)!\i/3!(V)+(0,1)$)to[out=60,in=120]($(V)!\i/3!(W)+(0,1)$)to[out=60,in=120]($(W)!\i/3!(Z)+(0,1)$)}}
($(U)!1/3!(V)+(0,1)$)to[out=60,in=120]($(V)!2/3!(W)+(0,1)$)
($(U)!2/3!(V)+(0,1)$)to[out=60,in=120]($(V)!1/3!(W)+(0,1)$)
($(V)!1/3!(W)+(0,1)$)to[out=60,in=120]($(W)!2/3!(Z)+(0,1)$)
($(V)!2/3!(W)+(0,1)$)to[out=60,in=120]($(W)!1/3!(Z)+(0,1)$);
\draw[fill=black] 
(U) circle [radius=2pt]
(V) circle [radius=2pt]
(W) circle [radius=2pt]
(Z) circle [radius=2pt];
\draw[fill=white]
\foreach \i in {1,2} {
($(U)!\i/3!(V)+(0,1)$) circle [radius=2pt]
($(V)!\i/3!(W)+(0,1)$) circle [radius=2pt]
($(W)!\i/3!(Z)+(0,1)$) circle [radius=2pt]};
\foreach \i/\j in {U/u, V/v, W/w, Z/z}{
\node[below] at (\i) {$\j$};}
\node[left] at ($(U)!1/3!(V)+(0,1)$) {$x_{uv}^1$};
\node[left] at ($(V)!1/3!(W)+(0,1)$) {$x_{vw}^1$};
\node[left] at ($(W)!1/3!(Z)+(0,1)$) {$x_{wz}^1$};
\node[right] at ($(U)!2/3!(V)+(0,1)$) {$x_{uv}^2$};
\node[right] at ($(V)!2/3!(W)+(0,1)$) {$x_{vw}^2$};
\node[right] at ($(W)!2/3!(Z)+(0,1)$) {$x_{wz}^2$};
\end{tikzpicture}
\end{minipage}
\caption{An example of a graph $G$ from the proof of Theorem~\ref{t-claw} with edges $uv$, $vw$ and $wz$, together with the graph $G'+F$, where the edges of $F$ are blue.}\label{fig-claw}
\end{figure}

\begin{theorem}\label{t-claw}
$1$-{\sc Cut} is \NP-complete on probe $K_{1,3}$-free graphs.
\end{theorem}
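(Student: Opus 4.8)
I would prove hardness by a polynomial reduction from \mc\ itself (i.e.\ $1$-{\sc Cut}), which is \NP-complete on connected graphs~\cite{Ch84}, using the edge gadget drawn in Figure~\ref{fig-claw}. Given a connected graph $G$, construct $G'$ on vertex set $V(G)\cup N$ with $N=\{x_e^1,x_e^2 : e\in E(G)\}$ as follows: delete every edge of $G$, and for each edge $e=ab$ of $G$ make both $x_e^1$ and $x_e^2$ adjacent to $a$ and to $b$. Set $P=V(G)$ and take $N$ as the non-probes, which is independent in $G'$ by construction. As the certifying set take $F\subseteq\binom{N}{2}$ that turns, for every $a\in V(G)$, the set $N_{G'}(a)\cap N=\{x_e^i : e\ni a\}$ into a clique. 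This construction is polynomial, and $G'$ is connected whenever $G$ is. Membership of \mc\ in \NP\ is immediate, so it remains to argue that $(G',P,N)$ is a valid probe $K_{1,3}$-free instance and that the reduction is correct.

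\textbf{Claw-freeness of $G'+F$.} I would first verify that $G'+F$ is $K_{1,3}$-free. Each original vertex $a$ has $N_{G'+F}(a)=\{x_e^i:e\ni a\}$, which is a clique by the definition of $F$, so $a$ is never the centre of a claw. For a gadget vertex $x_e^i$ with $e=ab$, its neighbourhood consists of $a,b$ together with every $x_f^j$ such that $f$ shares an endpoint with $e$. Here $a$ and $b$ are non-adjacent, but every such $x_f^j$ is adjacent in $G'$ to $a$ or to $b$, and any two $x$-vertices whose edges share an endpoint are adjacent in $F$. A short pigeonhole argument (among any three neighbours of $x_e^i$, two lie in a common clique $N(a)\cap N$ or $N(b)\cap N$) then shows that no three neighbours of $x_e^i$ are pairwise non-adjacent. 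Hence $G'+F$ is claw-free, so $(G',P,N)$ is indeed partitioned probe $K_{1,3}$-free.

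\textbf{Equivalence (the observation of Moshi).} The heart of the argument is that $G$ has a matching cut if and only if $G'$ does. For the forward direction, colour each $a\in V(G)$ with its colour in a red-blue $1$-colouring of $G$; for a monochromatic edge $e=ab$ colour $x_e^1,x_e^2$ alike, and for a bichromatic (cut) edge $e=ab$ give $x_e^1$ and $x_e^2$ the two different colours. Since in a matching cut each vertex is incident to at most one cut edge, every $a$ then sees at most one opposite neighbour and every $x_e^i$ sees exactly one or none; by Observation~\ref{o-cut-colouring} this is a matching cut of $G'$. For the converse, restrict a matching cut of $G'$ to $V(G)$: both colours must appear there, since if all of $V(G)$ were, say, blue, then every $x_e^i$ has two blue neighbours and is forced blue, leaving no red vertex. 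The key step is that for a bichromatic edge $e=ab$ the gadget forces $x_e^1,x_e^2$ to receive different colours, so such an edge contributes exactly one opposite-coloured neighbour to each of $a$ and $b$; hence no vertex of $G$ can be incident to two bichromatic edges, and the restriction is a matching cut of $G$.

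I expect the converse direction to be the main obstacle, as one must rule out a matching cut of $G'$ that is trivial on the probes or that picks out a non-matching set of edges of $G$. Both issues are resolved by the local analysis of the four-cycle $a$–$x_e^1$–$b$–$x_e^2$ induced by each gadget, which pins down exactly which colourings of $x_e^1,x_e^2$ are admissible given the colours of $a$ and $b$ (in particular, $a,b$ equal forces $x_e^1,x_e^2$ to match, and $a,b$ opposite forces $x_e^1,x_e^2$ opposite). Combining this equivalence with the claw-freeness of $G'+F$ yields \NP-completeness of $1$-{\sc Cut} on probe $K_{1,3}$-free graphs.
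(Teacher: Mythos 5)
Your proposal is correct and follows essentially the same route as the paper: the identical gadget (replace each edge $uv$ by two common neighbours $x_{uv}^1,x_{uv}^2$), the same certifying set $F$ making the intermediate neighbourhood of each original vertex a clique, and the same two-clique covering of neighbourhoods to establish claw-freeness of $G'+F$. The only difference is that you prove the equivalence between matching cuts of $G$ and of $G'$ directly via the local analysis of the $4$-cycle gadget, whereas the paper simply cites Moshi's observation; your direct argument is sound.
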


\begin{proof}
    We reduce from $1$-{\sc Cut}, which we recall is \NP-complete~\cite{Ch84}. From a connected graph $G=(V,E)$ we construct a connected graph $G'$ by replacing each edge $uv$ with 
     two new vertices $x_{uv}^1$ and $x_{uv}^2$, which we call {\it intermediate}, and edges $ux_{uv}^1$, $ux_{uv}^2$, $vx_{uv}^1$ and $vx_{uv}^2$. See Figure~\ref{fig-claw} for an example.

    We claim that $G'$ is probe $K_{1,3}$-free. To see this, let $N$ be the set of all intermediate vertices. Note that $N$ is an independent set in $G'$. Let $F$ consist of all edges between any two intermediate vertices that have at least one common neighbour in $V$;
    see also Figure~\ref{fig-claw}.
    
     We claim that $G'+F$ is $K_{1,3}$-free.
 For a contradiction, suppose that $G'+F$ has an induced
claw
  say with center~$z$.
By construction, the neighbourhood in~$G'+F$ of every $u \in V$ is a clique.
Hence, $z=x_{uv}^i$ for some $uv\in E$ and $i\in \{1,2\}$. Furthermore, all neighbours of $z$ in $V(G'+F)$ can be partitioned into a pair of cliques. The first consists of $u$ and $x_{uw}^1,x_{uw}^2$ for every $w$ such that $uw \in E$ and the second consists of $v$ and $x_{wv}^1,x_{wv}^2$ for every $w$ such that $wv \in E$. This contradicts our assumption that $z$ is the centre of an induced claw.

Moshi~\cite{Mo89} showed that a graph has a $1$-cut if and only if the graph obtained from it by replacing an edge $uv$ with 
     two new vertices $x_{uv}^1$ and $x_{uv}^2$, and edges $ux_{uv}^1$, $ux_{uv}^2$, $vx_{uv}^1$ and $vx_{uv}^2$ has a matching cut.
     Hence, $G$ has a $1$-cut if and only if $G'$ has a $1$-cut, which means the theorem is proven. \qed
\end{proof}

\noindent
The diamond $\overline{2P_1+P_2}$ is obtained from taking the complement of $2P_1+P_2$, or equivalently, from the $K_4$ after removing an edge.
A graph is {\it subcubic} if it has maximum degree at most~$3$.
The proof of our next result is similar to a corresponding result for {\sc Vertex Cover} from \cite{BOPPRL25} except that we reduce from a different known \NP-complete problem. 

 \begin{figure}[t]
\centering
\begin{minipage}{0.3\textwidth}
\centering
\begin{tikzpicture}[scale=0.7]
\coordinate (A) at (-1.5,2);
\coordinate (B) at (1.5,2);
\coordinate (C) at (-1.5,-2);
\coordinate (D) at (1.5,-2);
\coordinate (E) at (0,0.8);
\coordinate (F) at (0,-0.8);
\draw (A)--(B)--(D)--(C)--(A)--(E)--(B)(E)--(F)(C)--(F)--(D);
\draw[fill=black] 
(A) circle [radius=3pt]
(B) circle [radius=3pt]
(C) circle [radius=3pt]
(D) circle [radius=3pt]
(E) circle [radius=3pt]
(F) circle [radius=3pt];
\end{tikzpicture}
\end{minipage}%
\begin{minipage}{0.3\textwidth}
\centering
\begin{tikzpicture}[scale=0.7]
\coordinate (A) at (-1.5,2);
\coordinate (B) at (1.5,2);
\coordinate (C) at (-1.5,-2);
\coordinate (D) at (1.5,-2);
\coordinate (E) at (0,0.8);
\coordinate (F) at (0,-0.8);
\draw (A)--(B)--(D)--(C)--(A)--(E)--(B)(E)--(F)(C)--(F)--(D);
\draw[fill=black] 
(A) circle [radius=3pt]
(B) circle [radius=3pt]
(C) circle [radius=3pt]
(D) circle [radius=3pt]
(E) circle [radius=3pt]
(F) circle [radius=3pt];
\draw[fill=white]
\foreach \i in {1,2,3,4} {
($(A)!\i/5!(B)$) circle [radius=2pt]
($(A)!\i/5!(C)$) circle [radius=2pt]
($(B)!\i/5!(D)$) circle [radius=2pt]
($(D)!\i/5!(C)$) circle [radius=2pt]
($(A)!\i/5!(E)$) circle [radius=2pt]
($(B)!\i/5!(E)$) circle [radius=2pt]
($(C)!\i/5!(F)$) circle [radius=2pt]
($(D)!\i/5!(F)$) circle [radius=2pt]
($(E)!\i/5!(F)$) circle [radius=2pt]};
\end{tikzpicture}
\end{minipage}%
\begin{minipage}{0.3\textwidth}
\centering
\begin{tikzpicture}[scale=0.7]
\coordinate (A) at (-1.5,2);
\coordinate (B) at (1.5,2);
\coordinate (C) at (-1.5,-2);
\coordinate (D) at (1.5,-2);
\coordinate (E) at (0,0.8);
\coordinate (F) at (0,-0.8);
\draw (A)--(B)--(D)--(C)--(A)--(E)--(B)(E)--(F)(C)--(F)--(D);
\draw[color=blue!50!white, thick]
\foreach \i in {1} {
($(A)!\i/5!(B)$)--($(A)!\i/5!(E)$)
}
\foreach \i in {1} {
($(B)!\i/5!(E)$)--($(B)!\i/5!(D)$)
}
\foreach \i in {1} {
($(E)!\i/5!(B)$)--($(E)!\i/5!(F)$)
}
\foreach \i in {1} {
($(F)!\i/5!(C)$)--($(F)!\i/5!(D)$)
}
\foreach \i in {1} {
($(D)!\i/5!(C)$)--($(D)!\i/5!(F)$)
}
\foreach \i in {1} {
($(C)!\i/5!(F)$)--($(C)!\i/5!(A)$)
};
\draw[fill=black] 
(A) circle [radius=3pt]
(B) circle [radius=3pt]
(C) circle [radius=3pt]
(D) circle [radius=3pt]
(E) circle [radius=3pt]
(F) circle [radius=3pt];
\draw[fill=white]
\foreach \i in {1,2,3,4} {
($(A)!\i/5!(B)$) circle [radius=2pt]
($(A)!\i/5!(C)$) circle [radius=2pt]
($(B)!\i/5!(D)$) circle [radius=2pt]
($(D)!\i/5!(C)$) circle [radius=2pt]
($(A)!\i/5!(E)$) circle [radius=2pt]
($(B)!\i/5!(E)$) circle [radius=2pt]
($(C)!\i/5!(F)$) circle [radius=2pt]
($(D)!\i/5!(F)$) circle [radius=2pt]
($(E)!\i/5!(F)$) circle [radius=2pt]};
\end{tikzpicture}
\end{minipage}
\caption{An example of a graph $G$ from the proof of Theorem~\ref{t-claw3}, together with the graphs $G'$ and $G'+F$, from left to right, where the edges of $F$ are coloured blue.}\label{fig-hardness-claw-diamond}
\end{figure}
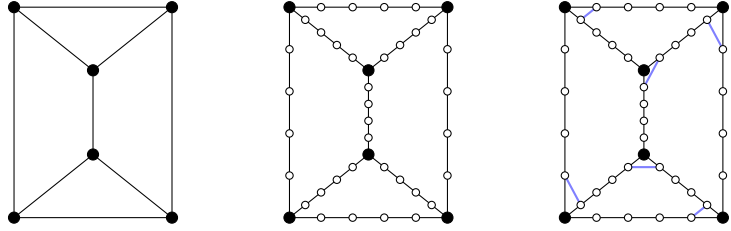

\begin{theorem}\label{t-claw3}
{\sc Perfect Matching Cut} is \NP-complete on the class of probe $(K_{1,3},\overline{2P_1+P_2})$-free subcubic planar graphs (and thus on probe $K_{1,3}$-free graphs).
\end{theorem}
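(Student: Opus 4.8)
The plan is to give a polynomial reduction from \pmc\ on \emph{subcubic planar} graphs, which is known to be \NP-complete~\cite{LT22}; membership in \NP\ is immediate, since one can guess a red-blue colouring and check in polynomial time, via Observation~\ref{o-cut-colouring}, that it is a perfect red-blue $1$-colouring. Given a connected subcubic planar graph $G$, I would build $G'$ by subdividing every edge of $G$ exactly four times, so that each edge $uv\in E(G)$ becomes a path $u\,s_1\,s_2\,s_3\,s_4\,v$. I let the probes be $P=V(G)$ together with the two middle subdivision vertices $s_2,s_3$ of every edge, and let the non-probes $N$ be the set of all \emph{outer} subdivision vertices, that is, those $s_1,s_4$ adjacent to an original vertex. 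Since $s_1$ and $s_4$ are non-adjacent on each path and outer vertices of distinct edges are pairwise non-adjacent, $N$ is independent in $G'$, as required. Finally, for every original vertex $u$ of degree $3$ with incident outer vertices $a,b,c$, I add to $F$ a single edge $ab$ between two of them; this is the certifying set, and $F\subseteq\binom{N}{2}$. Note that each outer vertex belongs to a unique original vertex, so $F$ is a matching.

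Next I would verify that $(G',P,N)$ lies in the target class. Subdividing preserves planarity and raises no degree above $3$, so $G'$ is subcubic and planar. The heart of the argument is that $G'+F$ is both claw-free and diamond-free. For claw-freeness, the only vertices of degree at least $3$ are the original degree-$3$ vertices and the outer vertices that received an $F$-edge; in each case two of the three neighbours are adjacent (the $F$-edge $ab$ at $u$, and the edge $ub$ at $a$), so no induced $K_{1,3}$ arises, while all remaining vertices have degree at most $2$. For diamond-freeness, I would note that $G'+F$ contains no $K_4$ (because $F$ is a matching and the neighbours $a,b,c$ of $u$ carry only the edge $ab$) and recall that a $K_4$-free graph is diamond-free if and only if no edge lies in two triangles; since the only triangles of $G'+F$ are the vertex-disjoint local triangles $uab$, every edge lies in at most one triangle. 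I expect this verification---in particular ruling out a claw or diamond spanning two different vertex-gadgets---to be the main, if routine, obstacle.

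It then remains to prove that $G$ has a perfect matching cut if and only if $G'$ does; here the choice of \emph{four} subdivisions is essential, while $F$ plays no role, as the problem is posed on $G'$. In any perfect red-blue $1$-colouring of $G'$, each internal vertex of a subdivided path has degree $2$ and hence exactly one neighbour of the opposite colour; a direct analysis of which path edges may be cut shows that a path $u\,s_1\,s_2\,s_3\,s_4\,v$ admits exactly two patterns: a \emph{monochromatic} one with $c_u=c_v$ and no cut edge incident to $u$ or $v$, and a \emph{bichromatic} one with $c_u\neq c_v$ and exactly one cut edge at each of $u$ and $v$. Thus a subdivided edge is bichromatic precisely when it forces a cut at its endpoints, so restricting any perfect red-blue $1$-colouring of $G'$ to $V(G)$ yields a colouring in which every original vertex has exactly one neighbour of the opposite colour, that is, a perfect matching cut of $G$; conversely, any perfect matching cut of $G$ extends pattern-by-pattern along each edge to a perfect red-blue $1$-colouring of $G'$. (With two subdivisions this correspondence reverses, which is why four are used.)

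Finally, since $(K_{1,3},\overline{2P_1+P_2})$-free graphs form a subclass of $K_{1,3}$-free graphs, \NP-completeness on the former immediately yields \NP-completeness on probe $K_{1,3}$-free graphs, giving the parenthetical statement.
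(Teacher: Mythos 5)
Your proposal is correct and follows essentially the same route as the paper: subdivide every edge four times, take the outer subdivision vertices as the independent set of non-probes, add one certifying edge per degree-$3$ original vertex between two of its incident outer vertices, verify $(K_{1,3},\overline{2P_1+P_2})$-freeness of $G'+F$, and invoke the Le--Telle fact that a fourfold edge subdivision preserves the existence of a perfect matching cut. The only cosmetic difference is the base problem: the paper reduces from \pmc\ on \emph{cubic bipartite planar} graphs, attributing that hardness to Bonnet, Chakraborty and Duron~\cite{BCD23} rather than to~\cite{LT22}, and it additionally notes that the $F$-edges can be chosen so that $G'+F$ itself stays planar and subcubic.
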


\begin{proof}
We reduce from {\sc Perfect Matching Cut}. Bonnet, Chakraborty and Duron~\cite{BCD23} proved that  {\sc Perfect Matching Cut} is \NP-complete even for cubic bipartite planar graphs.
As such, we may assume that the instance $G$ of {\sc Perfect Matching Cut} is cubic, bipartite and planar.

We subdivide each edge of $G$ four times, that is, we replace each $uv\in E(G)$ with four vertices $y_{uv}^1,\ldots, y_{uv}^4$ and edges $uy_{uv}^1$, $y_{uv}^1y_{uv}^2$, $y_{uv}^2y_{uv}^3$, $y_{uv}^3y_{uv}^4$ and $y_{uv}^4v$. Let $G'$ be the resulting graph. We say that the vertices $y_{uv}^1$ and $y_{uv}^4$ are the {\it intermediate-close} vertices of $G'$. See Figure~\ref{fig-hardness-claw-diamond} for an example.

We claim that $G'$ is a probe $(K_{1,3},\overline{2P_1+P_2})$-free subcubic planar graph. In order to see this, let $N$ be the set of intermediate-close vertices of $G'$. We note that $N$ is an independent set in $G'$. 
As $G$ is cubic, every $u\in V(G)$ has three intermediate-close neighbours in $G'$. We select exactly one edge between two of them to be in $F$, while maintaining planarity; see also Figure~\ref{fig-hardness-claw-diamond}. Hence, $G'+F$ is planar. Moreover, it is readily seen that $G'+F$ is  $(K_{1,3},\overline{2P_1+P_2})$-free. Finally, as intermediate-close vertices in $G'$ have degree~$2$, it follows that $G'+F$ is subcubic.

Le and Telle~\cite{LT22} proved that a graph has a perfect matching cut if and only if the graph obtained from it by subdividing an edge four times has a perfect matching cut.
     Hence, $G$ has a perfect matching cut if and only if $G'$ has a perfect matching cut, which means the theorem is proven. \qed
\end{proof}

\noindent
We recall that a graph is split if and only if it is $(2P_2,C_4,C_5)$-free, and we show:

\begin{theorem}\label{t-split}
For every $d\geq 1$, $d$-{\sc Cut} and {\sc Perfect Matching Cut} are \NP-complete on probe split graphs (and thus on probe $2P_2$-free graphs).
\end{theorem}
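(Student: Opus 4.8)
The plan is to give a single reduction that simultaneously handles \dcut\ (for every fixed $d\geq 1$) and \pmc\ on probe split graphs. Since a split graph is exactly a $(2P_2,C_4,C_5)$-free graph, any hardness on probe split graphs immediately yields hardness on probe $2P_2$-free graphs, so it suffices to build instances whose ``certified'' completion $G+F$ is split. The natural source problem is \nae\ (or a suitably restricted SAT variant), which is the standard starting point for matching-cut-style hardness results; I would reduce from an \NP-complete variant of \nae\ and encode variables and clauses into two sides of a partition, with one side destined to become the clique of the split graph and the other the independent set.

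The key idea is to exploit the probe model: the independent set $N$ of non-probes is precisely the place where we are allowed to hide missing edges. I would build a graph $G$ on vertex set $P\cup N$, where $P$ carries the combinatorial gadgets (variable gadgets and clause gadgets, connected so that a red--blue $d$-colouring is forced to correspond to a truth assignment in which no clause is monochromatic), and where $N$ is a large independent set of auxiliary vertices. The certifying edge set $F\subseteq\binom{N}{2}$ would be chosen to turn $N$ into a clique, so that $G+F$ becomes split: the crucial point is that in $G+F$ the set $N\cup F$ forms the clique side and $P$ (or a subset of it) forms the independent side, and one checks that with the adjacencies between $P$ and $N$ chosen correctly, $G+F$ contains no $2P_2$, $C_4$, or $C_5$. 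Because $N$ is independent in $G$ itself, this is exactly a legal partitioned probe split instance.

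Concretely, I would first fix the gadget that forces truth-value consistency: each variable $x$ is represented by a pair (or small clique) of vertices in $P$ whose colours in any red--blue $d$-colouring encode the value of $x$, using the bound $d$ on cross-colour degree to prevent cheating. Each clause is a vertex or a short gadget in $P$ adjacent to the literal-vertices; the $d$-cut constraint forces each clause vertex to see at least one red and at least one blue literal, i.e. a not-all-equal satisfying assignment. I would then verify both directions: a NAE-satisfying assignment yields a valid red--blue $d$-colouring (and, for the perfect-matching-cut version, a colouring in which the cut is a perfect matching), and conversely any red--blue $d$-colouring of $G$ colour-processes (using Lemma~\ref{l-process}) into a consistent assignment. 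For \pmc\ the gadgets must additionally enforce that every vertex has exactly one neighbour of the opposite colour; this is the part where the gadget design is most delicate, since degrees and incidences must be balanced.

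I expect the main obstacle to be designing a single gadget family that is simultaneously $(2P_2,C_4,C_5)$-free after adding $F$ \emph{and} robust across all values of $d$ as well as for the perfect-matching-cut variant. The splitness requirement severely constrains the interaction between $P$ and $N$: once $N$ becomes a clique, any two non-adjacent vertices of $P$ together with two vertices of $N$ threaten to create a forbidden $2P_2$ or induced cycle, so the adjacencies from clause/variable vertices into $N$ must be arranged so that every relevant vertex of $P$ is dominated by $N$ in $G+F$. Reconciling this global split structure with the local colouring constraints needed to encode \nae\ is the crux; I would handle it by making each gadget vertex in $P$ adjacent to enough of $N$ that no forbidden induced subgraph survives in $G+F$, while keeping the $d$-cut/perfect-matching analysis entirely within $G$.
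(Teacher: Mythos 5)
Your proposal is a plan rather than a proof: no gadget is actually constructed, neither direction of the reduction is verified, and you yourself flag the central difficulty (making the gadgets simultaneously split after adding $F$, correct for every $d\geq 1$, and balanced enough for \pmc) as unresolved. That difficulty is real. If $N$ is to become the clique side of the split graph, then essentially all of $P$ must end up on the independent side, which severely restricts what gadgets you can place inside $P$ at all; designing \nae-style variable and clause gadgets under that constraint, uniformly in $d$ and also for the perfect-matching variant, is a substantial open piece of work in your write-up, not a routine detail.

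The missing idea is that none of this gadget engineering is needed. Every connected bipartite graph $G$ is a probe split graph: take one bipartition class as the set $N$ of non-probes (it is an independent set), let $F$ consist of all edges inside $N$, and observe that $G+F$ is split, with $N$ the clique and the other class the independent set. Since \dcut\ is already known to be \NP-complete on bipartite graphs for every $d\geq 1$ (Moshi for $d=1$, Feghali, Lucke, Paulusma and Ries for $d\geq 2$) and \pmc\ is \NP-complete on bipartite graphs by Le and Telle, the theorem follows immediately with no new reduction at all. This is exactly the paper's proof. Your instinct to force $N$ into a clique is the right one, but you should apply it to an existing hard bipartite instance rather than building a fresh reduction from \nae; as it stands, your argument has a genuine gap because the construction it depends on is never supplied.
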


\begin{proof}
It is known that for every $d\geq 1$, {\sc $d$-Cut} and {\sc Perfect Matching Cut} are \NP-complete for bipartite graphs. The former statement was shown in~\cite{Mo89} for $d=1$ and in~\cite{FLPR23} for $d\geq 2$. The latter statement was shown in~\cite{LT22}.

Let $G$ be a connected bipartite graph. Let $N$ be one of the two bipartition classes. Note that $N$ is an independent set. Let $F$ consist of all the edges between vertices of $N$. Then $G+F$ is a split graph.
Hence, the theorem follows. \qed
\end{proof}

\noindent
The proof of our final result is similar to the proof of Lucke et al.~\cite{LMPS24} for showing that $d$-{\sc Cut} is \NP-complete for $3P_2$-free graphs for every $d\geq 2$.

\begin{theorem}\label{t-4p1}
For every $d\geq 2$, $d$-{\sc Cut} is \NP-complete on probe $4P_1$-free graphs.
\end{theorem}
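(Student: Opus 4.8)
\emph{Membership in \NP{} is immediate}, since a red-blue $d$-colouring is a polynomial-size certificate that can be checked in polynomial time. For hardness I would follow the reduction of Lucke et al.~\cite{LMPS24} for $3P_2$-free graphs, reducing from \nae. The plan is to build from a \nae-formula a connected graph $G'$ together with a partition $(P,N)$ so that (a)~$G'$ has a $d$-cut if and only if the formula has a not-all-equal assignment, and (b)~$(G',P,N)$ is probe $4P_1$-free. The guiding principle is to place a dense ``skeleton'' carrying the forced colours into the probe set $P$, and to place all the combinatorial choices of the formula onto an independent set $N$ of non-probes. This is exactly what makes the probe model more powerful here: $N$ may be a large independent set of $G'$ (so $G'$ itself is far from $4P_1$-free) while still being harmless for the $4P_1$-free witness.

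\emph{The $4P_1$-free witness.} The key new ingredient is to engineer $G'$ so that $G'[P]$ has independence number at most~$3$ (for instance by letting $G'[P]$ consist of a bounded number of cliques), and so that for every non-probe $v\in N$ the non-neighbours of $v$ inside $P$ induce a subgraph of independence number at most~$2$. I would then take $F=\binom{N}{2}$, which is allowed because $N$ is independent in $G'$. In $G'+F$ the set $N$ becomes a clique, so any independent set meets $N$ in at most one vertex; an independent set using no vertex of $N$ has size at most $\alpha(G'[P])\le 3$, and one using a single $v\in N$ together with vertices of $P$ has its $P$-part inside the non-neighbourhood of $v$, hence size at most $1+2=3$. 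Thus $\alpha(G'+F)\le 3$, so $G'+F$ is $4P_1$-free and $(G',P,N)$ is probe $4P_1$-free.

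\emph{Correctness.} The colour-forcing rests on the clique analogue of Lemma~\ref{lem:cographBoundedColourClass}: in any red-blue $d$-colouring a clique on more than $2d$ vertices is monochromatic. I would use two large ``pole'' cliques inside $P$, one intended red and one intended blue; a non-probe that is complete to a pole inherits that pole's colour, which lets me hard-wire literals and propagate truth values, whereas a variable is represented by a non-probe attached to at most $d$ vertices of each pole, so that it remains free to take either colour. The clause gadgets would be wired so that an all-equal clause forces some vertex to be adjacent to $d+1$ vertices of each colour, which is forbidden by rule~{\bf R1}; colour-processing (Lemma~\ref{l-process}) then propagates the forced colours. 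As in the proof of Theorem~\ref{t-dc}, the case in which $P$ is monochromatic is disposed of separately in polynomial time, so I may assume both poles receive a colour, and the symmetry under swapping red and blue lets me fix which pole is red.

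\emph{Main obstacle.} The delicate point is to realise the clause gadgets entirely inside the independent non-probe set while respecting two competing requirements at once: the $d$-degree budget of a $d$-cut (which forces each gadget vertex to have few neighbours of the opposite colour, hence to be locally sparse) and the density requirement that every non-probe have independence number at most~$2$ in its $P$-non-neighbourhood (needed for the $4P_1$-free witness). Balancing the clique sizes against~$d$, and verifying that the not-all-equal constraint is enforced \emph{exactly}---neither under- nor over-constraining the colouring---while keeping every gadget vertex in $N$, is where the real work lies; the remainder is the routine check that $d$-cuts of $G'$ and not-all-equal assignments of the formula correspond.
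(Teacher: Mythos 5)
Your proposal correctly identifies the two structural pillars of the argument: variables go into the independent non-probe set $N$ with $F=\binom{N}{2}$, and the witness $G'+F$ is $4P_1$-free because it decomposes into at most three cliques (your accounting $\alpha(G'+F)\le\alpha(G'[P])+1\le 3$ is sound, as is the observation that a clique on more than $2d$ vertices is monochromatic in any red-blue $d$-colouring). This matches the skeleton of the paper's proof. However, there is a genuine gap: you never construct the graph $G'$. The clause gadgets, the wiring of variables to poles for general $d$, and the proof that $d$-cuts correspond to satisfying assignments are all deferred to ``where the real work lies,'' so the reduction is not actually carried out. In particular, your stated plan to realise the clause gadgets \emph{entirely inside} $N$ is in tension with your own witness requirement ($N$ independent in $G'$, so gadget vertices cannot interact with each other except through $P$), and you give no indication of how an all-equal clause would trigger rule \textbf{R1}; as written, the hardness claim rests on an unspecified construction.

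The paper resolves exactly the obstacle you flag, but by a different placement of the gadgets. It reduces not from \nae\ but from the Darmann--D\"ocker variant of {\sc $3$-Satisfiability} in which every clause is all-positive or all-negative and every variable occurs exactly twice positively and twice negatively. The all-positive clauses form a probe clique $K$ (forced red), the all-negative clauses form a probe clique $K'$ (forced blue), and each variable is a non-probe adjacent to the (at most four) clauses containing it. There are no separate clause gadgets in $N$ at all: the clause constraint is simply that a red clause vertex with three variable neighbours may have at most $d=2$ blue ones, i.e.\ at least one literal is true; and the occurrence restriction guarantees each variable vertex has at most two neighbours of the opposite colour. The monotone split of the clause set is what makes the two poles coincide with the clause vertices themselves, which is the idea missing from your sketch. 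Your general-$d$ handling is also left vague, whereas the paper pads $K$ and $K'$ with extra vertices to adjust the degree counts. In short: right framework for the probe witness, but the reduction itself still needs to be built.
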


\begin{proof}
We first suppose that $d=2$.
We reduce from a restricted variant of the {\sc $3$-Satisfiability} problem. 
Let $X= \{x_1,x_2,\cdots,x_n\}$ be a set of variables, and let  ${\cal C} = \{C_1, C_2, \cdots, C_m\}$ be a set of clauses over $X$. A {\it truth assignment} sets each $x_i$ either true or false. The {\sc $3$-Satisfiability} problem is that of deciding whether $(X,{\cal C})$ has a
{\it satisfying} truth assignment~$\phi$, that is, $\phi$ sets at least one literal true in each $C_i$. Darmann and D\"ocker~\cite{DD21} proved that {\sc $3$-Satisfiability} is \NP-complete even for instances in which:
\begin{enumerate}[(i)]
\item each variable occurs as a positive literal in exactly two clauses and as a negative literal in exactly two other clauses, and
\item each clause consists of three distinct literals that are either all positive or all negative.
\end{enumerate}

\noindent
By (ii), we can write ${\cal C}=\{C_1,\ldots,C_p\}\cup \{D_1,\ldots,D_q\}$ for some $p$ and $q$ with $p+q=n$, where every $C_i$ consists of only positive literals, and every $D_j$ consists of only negative literals.
From $(X,{\cal C})$, we construct a graph $G=(V,E)$. We first introduce a clique $K=\{C_1,\ldots,C_p\}$, a clique $K'=\{D_1,\ldots,D_q\}$ and an independent set $I=\{x_1,\ldots,x_n\}$, such that $K$, $K'$, $I$ are pairwise disjoint and $V=K\cup K'\cup I$.
We add an edge between $x_h$ and $C_i$ if and only if $x_h$ occurs as a literal in $C_i$. Finally, we add an edge between $x_h$ and $D_j$ if and only if $x_h$ occurs as a literal in $D_j$.
See Figure~\ref{f-4p1} for an example. 

\begin{figure}[t]
\centering
\begin{tikzpicture}

\foreach \i  in {1,...,4}{
	\node[rvertex, label = above: $C_\i$](C\i) at (\i, 2.5){};
}

\begin{scope}[shift = {(5,0)}]
\foreach \i  in {1,...,4}{
	\node[bvertex, label = above: $D_\i$](D\i) at (\i, 2.5){};
}\end{scope}

\node[rvertex, label = below: $x_1$](x1) at (2.5, 0){};
\foreach \i  in {2,...,5}{
	\node[bvertex, label = below: $x_\i$](x\i) at (\i+1.5, 0){};
}
\node[rvertex, label = below: $x_6$](x6) at (7.5, 0){};

\draw[edge](C1) -- (x1);
\draw[edge](C1) -- (x2);
\draw[edge](C1) -- (x3);

\draw[edge](C2) -- (x1);
\draw[edge](C2) -- (x3);
\draw[edge](C2) -- (x4);

\draw[edge](C3) -- (x2);
\draw[edge](C3) -- (x5);
\draw[edge](C3) -- (x6);

\draw[edge](C4) -- (x4);
\draw[edge](C4) -- (x5);
\draw[edge](C4) -- (x6);

\draw[edge](D1) -- (x1);
\draw[edge](D1) -- (x2);
\draw[edge](D1) -- (x4);

\draw[edge](D2) -- (x1);
\draw[edge](D2) -- (x3);
\draw[edge](D2) -- (x5);

\draw[edge](D3) -- (x2);
\draw[edge](D3) -- (x4);
\draw[edge](D3) -- (x6);

\draw[edge](D4) -- (x3);
\draw[edge](D4) -- (x5);
\draw[edge](D4) -- (x6);

\draw[dashed] (2,-0.5) rectangle (8,0.3);
\draw[dashed] (0.5,2.2) rectangle (4.5,4.3);
\draw[dashed] (5.5,2.2) rectangle (9.5,4.3);
\node[](k) at ( 0,3.25) {$K$};
\node[](k) at ( 10,3.25) {$K'$};
\node[](k) at ( 1.5,0) {$I$};

\end{tikzpicture}
\caption{\label{f-4p1} An example of a graph $G$ in the proof of Theorem~\ref{t-4p1}, namely when $X = \{x_1, \dots, x_6\}$ and $\mathcal{C} = \{ \{ x_1, x_2, x_3\},\{x_1, x_3, x_4 \},\{x_2, x_5, x_6 \},\{x_4, x_5, x_6 \}\} \cup 
\{\{\overline{x_1}, \overline{x_2},  \overline{x_4}\},\{\overline{x_1}, \overline{x_3},  \overline{x_5}\},  \{\overline{x_2}, \overline{x_4},  \overline{x_6}\},  \{\overline{x_3}, \overline{x_5},  \overline{x_6}\}\}$. For readability the edges inside the cliques $K$ and $K'$ are not shown. The figure is based on a corresponding figure from~\cite{LMPS24}.}
\end{figure}

We claim that $G$ is probe $4P_1$-free. That is, we can set $N=I$, and add all edges between the vertices of $I$. This results in a graph~$G'$ that is the union of three cliques $I$, $K$, $K'$, and thus $G'$ is $4P_1$-free. This claim is the only additional claim compared to the proof in~\cite{LMPS24}.

We claim that $(X,{\cal C})$ is a yes-instance of {\sc $3$-Satisfiability} if and only if $G$ has a $2$-cut.

First suppose $(X,{\cal C})$ is a yes-instance of {\sc $3$-Satisfiability}. Then $X$ has a truth assignment $\phi$ that sets at least one literal true in each $C_i$ and in each $D_j$. In $I$, for $h \in \{1,\ldots,n\}$, we colour
the vertex $x_h$ red if $\phi$ sets $x_h$ to be true, and otherwise we colour $x_h$ blue. We colour all the vertices in $K$ red and all the vertices in~$K'$ blue. 

Consider a vertex $x_h$ in $I$. First suppose that $x_h$ is coloured red. 
As each literal appears in exactly two clauses from $\{D_1,\ldots, D_q\}$, we find that $x_h$ has only two blue neighbours (which all belong to $K'$). 
Now suppose that $x_h$ is coloured blue. As each literal appears in exactly two clauses from $\{C_1,\ldots, C_p\}$, we find that $x_h$ has only two red neighbours (which all belong to $K$). 

Now consider a vertex~$C_i$ in $K$, which is coloured red. As $C_i$ consists of three distinct positive literals and $\phi$ sets at least one positive literal of $C_i$ to be true, $C_i$ is adjacent to at most two blue vertices in $I$.
Now consider a vertex~$D_j$ in $K'$, which is coloured blue. As $D_j$ consists of three distinct negative literals and $\phi$ sets at least one negative literal of $D_j$ to be true, we find that $D_j$ is adjacent to at most two red vertices in $I$. 

Now suppose $G$ has a $2$-cut. By Observation~\ref{o-cut-colouring}, this means that $G$ has a red-blue $2$-colouring $f$. We may assume without loss of generality that $|K|=p+1\geq 5$ and $|K'|=q+1\geq 5$ (otherwise we can solve the problem by brute force). Hence, both $K$ and $K'$ are monochromatic. Thus we may assume without loss of generality that $f$ colours every vertex of $K$ red. For a contradiction, suppose $f$ also colours every vertex of $K'$ red. As $f$ must colour at least one vertex of $G$ blue, this means that $I$ contains a blue vertex $x_i$.  
As each variable occurs as a positive literal in exactly two clauses and as a negative literal in exactly two other clauses,
 we now find that a blue vertex, $x_i$, has two red neighbours in $K$ and two red neighbours in $K'$, so four red neighbours in total, a contradiction. We conclude that $f$ must colour every vertex of $K'$ blue.

Recall that every $C_i$ and every $D_j$ consists of three literals. Hence, every vertex in $K\cup K'$ has three neighbours in $I$. As every vertex $C_i$ in $K$ is red, this means that at least one neighbour of $C_i$ in $I$ must be red.
As every vertex $D_j$ in $K'$ is blue, this means that at least one neighbour of $D_j$ in $I$ must be blue.
Hence, setting $x_i$ to true if $x_i$ is red in $G$ and to false if $x_i$ is blue in~$G$ gives us a satisfying truth assignment of $X$. 

\medskip
\noindent
Now, we consider the case when $d\geq 3$. We adjust $G$ as follows. We first modify $K$ into a larger clique by adding, for each $x_h$, a set $L_h$ of $d-3$ new vertices. We also modify $K'$ into a larger clique by adding for each $x_h$, a set $L'_h$ of $d-3$ vertices. For each $h\in \{1,\ldots,n\}$ we make $x_h$ complete to both $L_h$ and to $L_h'$.
Finally, we add additional edges between vertices in $K\cup L_1\cup \ldots \cup L_n$ and vertices in $K'\cup L_1'\cup \ldots \cup L_n'$, in such a way that every vertex in $K$ has $d-2$ neighbours in $K'$, and vice versa. 
The modified graph $G$ is still probe $4P_1$-free, and also still has size polynomial in $m$ and $n$.
The remainder of the proof uses the same arguments as before. \qed
\end{proof}

\section{Conclusions}\label{s-con}

We showed exactly for which graphs~$H$, polynomial results for $d$-{\sc Cut} ($d\geq 1$), {\sc Perfect Matching Cut} and {\sc Maximum Matching Cut} can be extended from $H$-free graphs to probe $H$-free graphs. This yielded complete complexity dichotomies for all three problems on probe $H$-free graphs. 

We note that Theorems~\ref{t-0} and~\ref{t-1} still contain open cases. 
We also propose to study other graph problems on probe $H$-free graphs; so far, systematic studies have only been performed for {\sc Vertex Cover}~\cite{BOPPRL25} and the problems in this paper.

\bibliographystyle{splncs04}

\begin{thebibliography}{10}
\providecommand{\url}[1]{\texttt{#1}}
\providecommand{\urlprefix}{URL }
\providecommand{\doi}[1]{https://doi.org/#1}

\bibitem{AELPS}
Ahn, J., Eagling-Vose, T., Lucke, F., Paulusma, D., Smith, S.: Finding $d$-cuts in claw-free graphs. CoRR  \textbf{arXiv:2505.17993} (2025)

\bibitem{ACGH12}
Ara{\'u}jo, J., Cohen, N., Giroire, F., Havet, F.: Good edge-labelling of graphs. Discrete Applied Mathematics  \textbf{160},  2502--2513 (2012)

\bibitem{BGL07}
Berry, A., Golumbic, M.C., Lipshteyn, M.: Recognizing chordal probe graphs and cycle-bicolorable graphs. {SIAM} Journal on Discrete Mathematics  \textbf{21},  573--591 (2007)

\bibitem{BCD23}
Bonnet, E., Chakraborty, D., Duron, J.: Cutting {B}arnette graphs perfectly is hard. Proc.~{WG} 2023, LNCS  \textbf{14093},  116--129 (2023)

\bibitem{Bo09}
Bonsma, P.S.: The complexity of the {M}atching-{C}ut problem for planar graphs and other graph classes. Journal of Graph Theory  \textbf{62},  109--126 (2009)

\bibitem{BOPPRL25}
Brettell, N., Oostveen, J.J., Pandey, S., Paulusma, D., Rauch, J., van Leeuwen, E.J.: Computing subset vertex covers in h-free graphs. Theoretical Computer Science  \textbf{1032},  115088 (2025), conference version in Proc. FCT 2024

\bibitem{CCKLP09}
Chandler, D.B., Chang, M., Kloks, T., Liu, J., Peng, S.: On probe permutation graphs. Discrete Applied Mathematics  \textbf{157},  2611--2619 (2009)

\bibitem{CKKLP05}
Chang, M., Kloks, T., Kratsch, D., Liu, J., Peng, S.: On the recognition of probe graphs of some self-complementary classes of perfect graphs. Proc. COCOON 2005, Lecture Notes in Computer Science  \textbf{3595},  808--817 (2005)

\bibitem{Ch84}
Chv{\'{a}}tal, V.: Recognizing decomposable graphs. Journal of Graph Theory  \textbf{8},  51--53 (1984)

\bibitem{CMR00}
Courcelle, B., Makowsky, J.A., Rotics, U.: Linear time solvable optimization problems on graphs of bounded clique-width. Theory of Computing Systems  \textbf{33},  125--150 (2000)

\bibitem{DD21}
Darmann, A., D{\"{o}}cker, J.: On simplified {N}{P}-complete variants of {M}onotone $3$-{S}at. Discrete Applied Mathematics  \textbf{292},  45--58 (2021)

\bibitem{FP82}
Farley, A.M., Proskurowski, A.: Networks immune to isolated line failures. Networks  \textbf{12},  393--403 (1982)

\bibitem{FLPR23}
Feghali, C., Lucke, F., Paulusma, D., Ries, B.: Matching cuts in graphs of high girth and {$H$}-free graphs. Proc.~ISAAC 2023, LIPIcs  \textbf{283},  28:1--28:16 (2023)

\bibitem{GPS12}
Golovach, P.A., Paulusma, D., Song, J.: Computing vertex-surjective homomorphisms to partially reflexive trees. Theoretical Computer Science  \textbf{457},  86--100 (2012)

\bibitem{GL04}
Golumbic, M.C., Lipshteyn, M.: Chordal probe graphs. Discrete Applied Mathematics  \textbf{143},  221--237 (2004)

\bibitem{GMM11}
Golumbic, M.C., Maffray, F., Morel, G.: A characterization of chain probe graphs. Annals of Operations Research  \textbf{188},  175--183 (2011)

\bibitem{GS21}
Gomes, G., Sau, I.: Finding cuts of bounded degree: complexity, {F}{P}{T} and exact algorithms, and kernelization. Algorithmica  \textbf{83},  1677--1706 (2021)

\bibitem{Gr70}
Graham, R.L.: On primitive graphs and optimal vertex assignments. Annals of the New York Academy of Sciences  \textbf{175},  170--186 (1970)

\bibitem{HT98}
Heggernes, P., Telle, J.A.: Partitioning graphs into generalized dominating sets. Nordic Journal of Computing  \textbf{5},  128--142 (1998)

\bibitem{LL23}
Le, H., Le, V.B.: Complexity results for matching cut problems in graphs without long induced paths. Proc.~{WG} 2023, LNCS  \textbf{14093},  417--431 (2023)

\bibitem{LT22}
Le, V.B., Telle, J.A.: The {P}erfect {M}atching {C}ut problem revisited. Theoretical Computer Science  \textbf{931},  117--130 (2022)

\bibitem{LMO25}
Lucke, F., Marchand, J., Olbrich, J.: Finding minimum matching cuts in {$H$}-free graphs and graphs of bounded radius and diameter. CoRR  \textbf{abs/2502.18942} (2025)

\bibitem{LMPS24}
Lucke, F., Momeni, A., Paulusma, D., Smith, S.: Finding $d$-cuts in graphs of bounded diameter, graphs of bounded radius and ${H}$-free graphs. Proc. WG 2024, LNCS  \textbf{14760},  415--429 (2025)

\bibitem{LPR22}
Lucke, F., Paulusma, D., Ries, B.: On the complexity of {M}atching {C}ut for graphs of bounded radius and ${H}$-free graphs. Theoretical Computer Science  \textbf{936},  33--42 (2022)

\bibitem{LPR23a}
Lucke, F., Paulusma, D., Ries, B.: Finding matching cuts in ${H}$-free graphs. Algorithmica  \textbf{85},  3290--3322 (2023)

\bibitem{LPR24}
Lucke, F., Paulusma, D., Ries, B.: Dichotomies for {M}aximum {M}atching {C}ut: ${H}$-freeness, bounded diameter, bounded radius. Theoretical Computer Science  \textbf{1017},  114795 (2024)

\bibitem{Mo89}
Moshi, A.M.: Matching cutsets in graphs. Journal of Graph Theory  \textbf{13},  527--536 (1989)

\bibitem{PP01}
Patrignani, M., Pizzonia, M.: The complexity of the {M}atching-{C}ut problem. Proc.~WG 2001, LNCS  \textbf{2204},  284--295 (2001)

\bibitem{ZSFCWKB94}
Zhang, P., Schon, E.A., Fischer, S.G., Cayanis, E., Weiss, J., Kistler, S., Bourne, P.E.: An algorithm based on graph theory for the assembly of contigs in physical mapping of {DNA}. Computer Applications in the Biosciences  \textbf{10},  309--317 (1994)

\end{thebibliography}

\end{document}